\DeclareMathOperator*{\argmax}{arg\,max}
\DeclareMathOperator*{\argmin}{arg\,min}
\def\@centernot#1#2{%
  \mathrel{%
    \rlap{%
      \settowidth\dimen@{$\m@th#1{#2}$}%
      \kern.5\dimen@
      \settowidth\dimen@{$\m@th#1=$}%
      \kern-.5\dimen@
      $\m@th#1\not$%
    }%
    {#2}%
  }%
}
\newtheorem{theorem}{Theorem}[section]
\newtheorem{definition}[theorem]{Definition}
\newtheorem{proposition}[theorem]{Proposition}
\newtheorem{observation}[theorem]{Observation}
\newtheorem{example}[theorem]{Example}
\newcommand{\envy}{\textit{envy}}
\newcommand{\efxenvy}{\textit{strong-envy}}
\newcommand{\N}{\mathcal{N}}
\newcommand{\R}{\mathbb{R}}
\renewcommand{\i}{\mathbf{i}}
\renewcommand{\varepsilon}{\epsilon}
\newcommand{\floor}[1]{\left\lfloor #1 \right\rfloor}
\newcommand{\ceil}[1]{\left\lceil #1 \right\rceil}
\definecolor{mygreen}{RGB}{80,180,0}
\definecolor{b2}{RGB}{51,153,255}
\definecolor{mycy2}{RGB}{255,51,255}
\newcommand*{\RN}[1]{\expandafter\@slowromancap\romannumeral #1@}
\newcommand{\define}[4][ignore]{%
  \ifstrequal{#1}{ignore}{}{
  \@namedef{thmtitle@#2}{#1}}%
  \@namedef{thm@#2}{#4}%
  \@namedef{thmtypen@#2}{lemma}%
  \newtheorem{thmtype@#2}[theorem]{#3}%
  \newtheorem*{thmtypealt@#2}{#3~\ref{#2}}%
}
\newcommand{\state}[1]{%
  \@namedef{curthm}{#1}
  \@ifundefined{thmtitle@#1}{
  \begin{thmtype@#1}
    }{
  \begin{thmtype@#1}[\@nameuse{thmtitle@#1}]
  }
    \label{#1}
    \@nameuse{thm@#1}
  \end{thmtype@#1}
  \@ifundefined{thmdone@#1}{
  \@namedef{thmdone@#1}{stated}%
  }{}
}
\newcommand{\restate}[1]{%
  \@namedef{curthm}{#1}
  \@ifundefined{thmtitle@#1}{
    \begin{thmtypealt@#1}
    }{
  \begin{thmtypealt@#1}[\@nameuse{thmtitle@#1}]
  }
    \@nameuse{thm@#1}
  \end{thmtypealt@#1}
  \@ifundefined{thmdone@#1}{
  \@namedef{thmdone@#1}{stated}%
  }{}
}
\newcommand{\thmlabel}[1]{
  \@ifundefined{thmdone@\@nameuse{curthm}}{\label{#1}
    }{\tag*{\eqref{#1}}}
}
\newif\ifcomments
\newcommand{\rik}[1]{{\textcolor{red}{Rik: { #1}}}}
\newcommand{\rik}[1]{}
\newif\ifcomments
\newcommand{\raghav}[1]{}
\newif\ifcomments
\newcommand{\vignesh}[1]{{\textcolor{orange}{Vignesh: { #1}}}}
\newcommand{\vignesh}[1]{}
\title{Relaxations of Envy-Freeness Over Graphs}
\author{Justin Payan, Rik Sengupta and Vignesh Viswanathan \\ University of Massachusetts Amherst \\ \texttt{\{jpayan, rsengupta, vviswanathan\}@umass.edu}}
\date{}
\begin{document}




\maketitle 


\begin{abstract}
When allocating a set of indivisible items among agents, the ideal condition of {\em envy-freeness} cannot always be achieved.
\emph{Envy-freeness up to any good} (EFX), and \emph{envy-freeness with $k$ hidden items} (HEF-$k$) are two very compelling relaxations of envy-freeness, which remain elusive in many settings. 
We study a natural relaxation of these two fairness constraints, where we place the agents on the vertices of an undirected graph, and only require that our allocations satisfy the EFX (resp. HEF) constraint on the edges of the graph. 
We refer to these allocations as {\em graph}-EFX (resp. {\em graph}-HEF) or simply $G$-EFX (resp. $G$-HEF) allocations.
We show that for any graph $G$, there always exists a $G$-HEF-$k$ allocation of goods, where $k$ is the size of a minimum vertex cover of $G$, and that this is essentially tight.
We show that $G$-EFX allocations of goods exist for three different classes of graphs --- two of them generalizing the star $K_{1, n-1}$ and the third generalizing the three-edge path $P_4$. Many of these results extend to allocations of \emph{chores} as well. Overall, we show several natural settings in which the graph structure helps obtain strong fairness guarantees.
Finally, we evaluate an algorithm using problem instances from Spliddit to show that $G$-EFX allocations appear to exist for paths $P_n$, pointing the way towards showing EFX for even broader families of graphs.
\end{abstract}

\section{Introduction}\label{sec:intro}

The problem of fairly allocating a set of indivisible goods among agents with preferences has been extensively studied by the multi-agent systems community \citep{fairallocationsurvey}.

Several notions of fairness have been proposed and analyzed in the last two decades; of all these notions, arguably the most compelling one is that of {\em envy-freeness}. 
In an envy-free allocation of goods, no agent prefers the set of goods allocated to any other agent over their own. 
Unfortunately, with indivisible goods, an envy-free allocation is not guaranteed to exist: consider an example with two agents and one indivisible good.
Several natural relaxations of envy-freeness have been explored in the literature --- such as {\em envy-freeness up to one good} (EF1) \citep{Budish2011EF1, lipton2004ef1}, {\em envy-freeness up to a less desired good} (EFL) \citep{barman2018}, {\em envy-freeness up to any good} (EFX) \citep{caragiannis2019unreasonable}, and {\em envy-freeness up to $k$ hidden goods} (HEF-$k$) \citep{hosseini2020}.

An allocation is EF1 if whenever an agent envies another agent, the envy can be eliminated by removing some item from the other agent's allocated bundle. An allocation is EFL if, roughly speaking, whenever an agent envies another agent, the envy can be eliminated by removing some ``small'' (in the first agent's perspective) item from the other agent's allocated bundle. An allocation is EFX if whenever an agent envies another agent, the envy can be eliminated by removing \emph{any} item from the other agent's allocated bundle. 
Finally, an allocation is (uniformly) HEF-$k$ (or uHEF-$k$) if there are $k$ or fewer agents who can each ``hide'' a single good from their allocated bundle (so that they themselves can see it but all other agents are unaware of it) and the resulting allocation is envy-free. 
Note that any EFX allocation is EFL, and any EFL allocation is EF1. 
Furthermore, any uHEF-$k$ allocation is EF1 as well, for any $0 \leq k \leq n$, where $n$ is the number of agents. Finally, any envy-free allocation is trivially EFX as well as uHEF-$0$. To our knowledge, these two fairness notions are the strongest relaxations of envy-freeness that have been considered in the literature. The uHEF-$k$ requirement essentially interpolates between EF1 and envy-freeness by means of the parameter $k$, while EFX is just a stronger global requirement on the envy. Note that these two notions are incomparable; there are allocations that are EFX but not uHEF-$(n - 1)$, and ones that are uHEF-$1$ but not EFX (Appendix \ref{apdx:intro}).

EF1 allocations are guaranteed to exist for any instance of the fair allocation problem, and can in fact be computed in polynomial time \citep{lipton2004ef1}. EFL allocations are guaranteed to exist for additive valuations, and can also be computed in polynomial time \citep{barman2018}. On the other hand, the existence of EFX or uHEF-$k$ allocations are not known beyond some very special cases. In fact, the first of these remains one of the biggest open questions in this subfield.

We introduce a relaxation of these fairness criteria, where agents are represented by vertices on a fixed graph and allocations only need to satisfy the relaxed envy constraint for all neighboring pairs of agents in the graph. For hidden envy-freeness, this amounts to an agent needing to hide a good in order to eliminate the envy only from its neighbors in the graph. For EFX, this amounts to only needing to satisfy the EFX criterion among the pairs of agents corresponding to the graph edges. This reduces to the usual notion of a uHEF-$k$ or EFX allocation when the underlying graph is complete.

In addition to being a generalization of both these fairness constraints, this model is also quite natural, as it captures envy under partial information. In the real world, agents typically do not envy other agents whose allocated bundles they are unaware of. In these cases, it suffices to only consider pairs of agents who are aware of each other and therefore know only each other's allocated bundles.

\subsection{Our Contributions}
We study graph-based relaxations of hidden envy freeness (HEF) and envy freeness upto any good (EFX).

In Section \ref{sec:hidden}, we discuss hidden envy-freeness on graphs. Specifically, we show that for \emph{any} graph $G$ with a vertex cover of size $k$, a round-robin protocol achieves a uHEF-$k$ allocation. We also show that this is tight, in that there exists an instance of the problem on $G$ for which we cannot do better than uHEF-$k$. However, we do show graphs on which there are instances where the optimal $k$ is bounded away from the size of the minimum vertex cover.

Shifting to the EFX criterion, our main theoretical results are in Section \ref{sec:theory}, in the presence of goods as well as chores. For goods, in Section \ref{subsec:stars}, we show the existence of EFX allocations on stars, and then generalize it in two ways. We show that these allocations can be computed efficiently in the case of additive valuations. Similarly, in Section \ref{subsec:paths}, we start by showing the existence of EFX allocations on three-edge paths, and then generalize this result. In Section \ref{subsec:chores}, we show that the results from Section \ref{subsec:stars} hold for chores as well. Furthermore, in the presence of goods \emph{and} chores, we show that for lexicographic valuations, we can find EFX allocations on all graphs with diameter at least $4$, whereas they are known to not exist in general \citep{goodsandchores}. In Section \ref{sec:empirical}, we present an algorithm (Section \ref{subsec:algorithm}) that empirically works for all instances of the problem if the underlying graph is a path, generated using real-world data from Spliddit. 



\section{Preliminaries and Notation}\label{sec:prelims}
We have $n$ {\em agents}, $N = \{1, 2, \dots, n\}$ and $m$ {\em goods}, $M = \{g_1, g_2, \dots, g_m\}$. Each agent $i$ has a {\em valuation function} $v_i: 2^M \to \R_+ \cup \{0\}$ over the set of goods. We present results for two kinds of valuation functions. We call a valuation function {\em general} if the only constraint placed on it is monotonicity, i.e., for any $S \subseteq T \subseteq M$, $v_i(S) \le v_i(T)$. We call a valuation function {\em additive} if the value of each subset $S \subseteq M$ is the sum of the values of the goods in $S$, i.e. $v_i(S) = \sum_{g \in S} v_i(\{g\})$. We write $v_i(g)$ instead of $v_i(\{g\})$ for readability.

For agents $i, j \in N$ and goods $g_k, g_\ell \in M$, we write $g_k \succ_i g_\ell$ to mean $v_i(g_k) > v_i(g_\ell)$ (i.e. agent $i$ prefers good $g_k$ to $g_\ell$). We define $S \succ_i T$ analogously for subsets $S, T \subseteq M$. Agents $i$ and $j$ are said to have {\em identical} valuation functions iff for all $S \subseteq M$, $v_i(S) = v_j(S)$. Agents $i$ and $j$ with additive valuations are said to have {\em consistent} valuation functions iff for all $g_k, g_\ell \in M$, $g_k \succ_i g_\ell$ iff $g_k \succ_j g_\ell$ (i.e. the two agents have the same preference \emph{orders} for the goods, but not necessarily the same valuations). Identical valuations are consistent, but the converse is not necessarily true.

An {\em allocation} is a partition of the set of goods $M$ to agents $N$, represented by a tuple $X = (X_1, X_2, \dots, X_n)$ where $X_i$ is the subset of $M$ received by agent $i$. We typically refer to $X_i$ as the {\em bundle} allocated to agent $i$.
For a bundle $X_i$ and good $g$, we will write $X_i + g$ or $X_i - g$ to denote $X_i \cup \{g\}$ or $X_i \setminus \{g\}$ respectively.
Given an allocation $X$, we say an agent $i$ {\em envies} an agent $j$ if $X_j \succ_i X_i$.

Let $G = (N, E)$ be an (undirected) graph on $n$ vertices ${1, \ldots, n}$, where the set of vertices corresponds to the set of agents $N$. When $G$ is undirected, we use $(i, j)$ to denote an \emph{undirected} edge between agents $i$ and $j$. We will consider our notions of envy on these graphs.

\subsection{Hidden Envy}\label{subsec:hidden}

An allocation $X$ is said to be \emph{envy-free up to $k$ hidden goods} (HEF-$k$) if there exists a subset $S \subseteq M$ with $|S| \leq k$, such that for every pair of agents $i, j \in N$, we have $v_i(X_i) \geq v_i(X_j \setminus S)$. If in addition, we have $|S \cap X_i| \leq 1$ for all $i$, we say the allocation $X$ is \emph{envy-free up to $k$ uniformly hidden goods} (uHEF-$k$). There are instances that admit HEF-$k$ allocations but not uHEF-$k$ allocations \citep{hosseini2020}.

Observe that if an allocation is uHEF-$k$ for some $k$, then it is EF1. Furthermore, any instance with additive valuations has a uHEF-$(n - 1)$ allocation, by a round-robin protocol. Conversely, uHEF-$k$ allocations for $k < n - 1$ may not exist, e.g. if all agents have identical additive valuations, and there are $n - 1$ goods, forcing all goods to be hidden in order to appease the agent who misses out.

When the agents are arranged on the graph $G$, our aim is to output an allocation $X$ of the set of goods $M$ among the agents such that there is some subset $S \subseteq M$, such that on ``hiding'' this set $S$, the envy along the edges of $G$ disappears; formally, for every edge $(i, j) \in E$, $v_i(X_i) \geq v_i(X_j \setminus S)$. We call such an allocation a {\em $G$-HEF-$k$} allocation, where $k = |S|$. If in addition, we have $|S \cap X_i| \leq 1$ for all $i \in N$, then we call such an allocation a $G$-uHEF-$k$ allocation. Our goal is to find such an allocation that minimizes $k$. In general, we will talk interchangeably about an agent $i$ and the vertex $i$ of $G$.

We define the \emph{neighborhood} of $i$ in $G$, $\text{Nbd}_G(i)$, as the set of vertices adjacent to $i$ in $G$, $\{j \in N: (i, j) \in E\}$.

\subsection{EFX}\label{subsec:efx}

Consider an allocation $X$ where agent $i$ envies agent $j$. We say the agent $i$ {\em strongly envies} an agent $j$ if there exists some good $g \in X_j$ such that $X_j - g \succ_i X_i$. We sometimes say the {\em strong envy} in this case equals $\max \{\max_{g \in X_j}(v_i(X_j - g) - v_i(X_i)), 0\}$. An allocation without any strong envy (i.e. where the strong envy is zero for all pairs $i, j \in N$) is an {\em EFX allocation}.

As in Section \ref{subsec:hidden}, when the agents are arranged on the graph $G$, our goal is to output an allocation $X$ of the set of goods $M$ among the agents $N$ such that there is no edge $(i, j) \in E$ with agent $i$ strongly envious of agent $j$. Informally, we wish to allocate the set of goods $M$ among the agents, who correspond to vertices of $G$, but we only care about maintaining the EFX criterion along each of the edges. We call such an allocation a {\em $G$-EFX allocation}. As before, we will talk interchangeably about an agent $i$ and the vertex $i$ of $G$; we also define the neighborhood of $i$ in $G$ as before.





We make two simple observations. First, if $G$ has multiple connected components, it suffices to solve the problem for any one of those components, say $G_1$, as that same allocation is trivially EFX on all of $G$. This is because the allocation of the empty set of goods is trivially EFX.
Second, if $G$ consists of at most three vertices, then an EFX allocation certainly exists on $G$, as in fact $K_2$ and $K_3$ are known to have EFX allocations \citep{efxcanon, threeagents} (under additive valuations).

Therefore, WLOG, in Sections \ref{sec:theory} and \ref{sec:empirical}, we restrict our attention to connected graphs with $n \geq 4$ vertices. We remark here that complete, exact EFX allocations are known in special cases for these graphs: when all agents have consistent valuations, when all agents have one of two different types of valuations, when each item can take one of two possible values, when valuations are submodular with binary marginal gains, or when valuations are lexicographic~\citep{efxcanon, twotypes, amanatidis2021maximum, babaioff2021fair, goodsandchores}.

\subsection{Chores}

An instance of chore division has a set of $n$ agents $N$, and a set of $m$ chores $M = \{c_1, c_2, \dots, c_m\}$. Agents' valuation functions $v_i: 2^M \to \mathbb{R}_{-} \cup \{0\}$ map sets of chores to nonpositive values. 

Although HEF-$k$ cannot be defined for chores, we can modify the definition of EFX for chores. Given an allocation $X$ where $i$ envies $j$, we say that $i$ strongly envies $j$ if there is some chore $c \in X_i$ such that $X_j \succ_i X_i - c$. We can define the amount of strong envy as $\max \{\max_{c \in X_i}(v_i(X_j) - v_i(X_i - c)), 0\}$. 

It is also possible to define problem instances with both goods and chores.
We assume, in such instances, an item is either a good for all agents or a chore for all agents. 
Agent $i$ strongly envies agent $j$ if there is some good $g$ in $X_j$ such that $X_j - g \succ_i X_i$ or some chore $c \in X_i$ such that $X_j \succ_i X_i - c$, and the amount of strong envy is $\max \{\max_{g \in X_j}(v_i(X_j - g) - v_i(X_i)), \max_{c \in X_i}(v_i(X_j) - v_i(X_i - c)), 0\}$. When dealing with goods and chores, we usually write the items as $M = \{o_1, o_2, \dots, o_m\}$, and we denote the set of goods as $M^+ \subseteq M$ and chores as $M^- \subseteq M$.

In all cases, an allocation without any strong envy is an EFX allocation. We define additive, general, identical, and consistent valuations analogously to their definitions for goods. 

Due to space constraints, we provide proof sketches throughout, relegating all details to the Appendices.

\section{Theoretical Results for Hidden Envy-Freeness on Graphs}\label{sec:hidden}

Let $G = G_1 \cup \dots \cup G_u$ be an undirected graph consisting of connected components $\{G_1, \dots, G_u\}$. 
In this section, we present an algorithm which takes as input any vertex cover $C$ of any connected component $G_j$ and outputs a $G$-uHEF-$|C|$ allocation in polynomial time. The algorithm is a modification of the well-known round robin algorithm. 

In the round robin algorithm, all goods start off unallocated. The algorithm proceeds in rounds. At each round, agents are given a chance to pick a good from the set of all unallocated goods one by one. The algorithm terminates when there are no goods left. We make two changes to this algorithm to create \textsc{Vertex Cover Round Robin}. First, given a vertex cover $C$ of the connected component $G_j$, we ignore the agents not present in $G_j$ --- we pretend like they do not exist and give them an empty bundle. Second, at every round, the agents in the vertex cover $C$ pick a good first followed by the agents in $G_j \setminus C$. We present pseudocode in Algorithm \ref{alg:hidden-envy-algo}.

\begin{algorithm}[h]
    \caption{\textsc{Vertex Cover Round Robin}}
    \begin{algorithmic}
        \Require {A vertex cover $C$ of the component $G_j$}
        \Ensure {A $G$-uHEF-$|C|$ allocation $X$ with hidden goods $S$}
        \State $U \gets M$ \Comment{$U$ stores the unallocated goods}
        \State $X \gets (X_1, X_2, \dots, X_n) = (\varnothing, \varnothing, \dots, \varnothing)$
        \State $S \gets \varnothing$
        \While{$U \ne \varnothing$}
        \For{$i \in C$}
            \If{$U \ne \varnothing$}
            \State $g \gets$ some good in $\argmax_{g' \in U} v_i(g')$
            \State $X_i \gets X_i + g$
            \State $U \gets U - g$
            \If{$|X_i| = 1$}
                \State $S \gets S \cup X_i$
            \EndIf
            \EndIf
        \EndFor
        \For{$i \in G_j \setminus C$}
            \If{$U \ne \varnothing$}
            \State $g \gets$ some good in $\argmax_{g' \in U} v_i(g')$
            \State $X_i \gets X_i + g$
            \State $U \gets U - g$
            \EndIf
        \EndFor
        \EndWhile
        \Return $X, S$
    \end{algorithmic}
    \label{alg:hidden-envy-algo}
  \end{algorithm}
  
When agents have additive valuations, \textsc{Vertex Cover Round Robin} outputs a $G$-uHEF-$|C|$ allocation in polynomial time. Note that this allocation is also trivially $G$-HEF-$|C|$. 

\begin{restatable}{theorem}{thmhiddenvertexcover}\label{thm:hiddenvertexcover}
Let $G = (N, E)$ be an undirected graph with connected components $\{G_1. G_2, \dots, G_u\}$, where some connected component $G_j$ has a vertex cover $C$ of size $k$. Consider any instance of agents and goods defined on this graph. Algorithm \ref{alg:hidden-envy-algo} (\textsc{Vertex Cover Round Robin}) with input vertex cover $C$ outputs a $G$-uHEF-$k$ allocation, under additive valuation functions.
\end{restatable}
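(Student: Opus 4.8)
The plan is to show that after running \textsc{Vertex Cover Round Robin} with input vertex cover $C$ of the component $G_j$, hiding the set $S$ of ``first-picked'' goods of the agents in $C$ kills all envy along edges of $G$. Note $|S| \le |C| = k$ and $|S \cap X_i| \le 1$ for every $i$ (each agent in $C$ contributes at most one good to $S$, each agent outside $C$ contributes none), so the output is automatically $G$-uHEF-$k$ provided the envy condition holds. Since every edge of $G$ has both endpoints in the same connected component, and all components other than $G_j$ receive the empty bundle, the only edges we must worry about are the edges of $G_j$; moreover every edge of $G_j$ has at least one endpoint in $C$ by definition of a vertex cover. So it suffices to verify, for every edge $(i,j') \in E(G_j)$ with $i \in C$, that (a) $v_i(X_i) \ge v_i(X_{j'} \setminus S)$, and (b) $v_{j'}(X_{j'}) \ge v_{j'}(X_i \setminus S)$.

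First I would set up the round-by-round bookkeeping. Index the rounds $1, 2, \dots$, and within a round let the agents in $C$ pick in a fixed order followed by the agents in $G_j \setminus C$ in a fixed order. For an agent $i \in C$, write $g_i^{(t)}$ for the good $i$ picks in round $t$ (if any), so $X_i = \{g_i^{(1)}, g_i^{(2)}, \dots\}$ and $g_i^{(1)} \in S$. The key greedy invariant is the standard one for round robin: when agent $i$ picks $g_i^{(t)}$, every good still available is worth at most $v_i(g_i^{(t)})$ to $i$; in particular any good picked after $i$'s turn in round $t$ (including all of $i$'s own later picks, and all picks by other agents after $i$ in round $t$ and in all later rounds) is worth at most $v_i(g_i^{(t)})$ to $i$.

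For part (a): let $j'$ be a neighbor of $i \in C$. Every good in $X_{j'}$ is picked at a turn strictly after $i$'s turn in the \emph{same} round or in a later round — this is because $i$ picks in every round it is active (it runs out only when $U$ empties), and $j'$, whether in $C$ (after $i$ in the $C$-order, or in a later round if before $i$) or in $G_j \setminus C$ (always after all of $C$ in its round), never picks before $i$ does in a round where both are active; the one subtlety is when $i$ is active in round $t$ but $j'$ is not, which only happens for the final partial round and only strengthens the bound. Writing $X_{j'} = \{h_1, h_2, \dots\}$ with $h_r$ picked no earlier than round $r$ of $i$'s picks (or round $r{-}$something — I would line them up so that $h_r$ is picked after $g_i^{(r)}$, using that $j'$ picks at most one good per round and after $i$ within a round when both active), the greedy invariant gives $v_i(h_r) \le v_i(g_i^{(r)})$ for all $r \ge 2$, i.e., for all but the first element of $X_{j'}$. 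Hiding that first element (which lies in $S$ since if $j' \in C$ its first pick is in $S$, and if $j' \notin C$ we instead note $|X_{j'}| \le |X_i|$ and can pair every element of $X_{j'}$ with a weakly-better element of $X_i$, so $X_{j'} \setminus S \subseteq X_{j'}$ works with room to spare) yields $v_i(X_{j'} \setminus S) = \sum_{r \ge 2} v_i(h_r) \le \sum_{r \ge 2} v_i(g_i^{(r)}) \le \sum_{r \ge 1} v_i(g_i^{(r)}) = v_i(X_i)$ by additivity. For part (b), $j'$ may be in $C$ or not. If $j' \notin C$, then $j'$ picks after $i$ in every round, so $v_{j'}(X_i) \le v_{j'}(X_{j'})$ outright (no hiding needed, classical round-robin domination, using $|X_i| = |X_{j'}|$ or $|X_i| = |X_{j'}|+1$ and pairing). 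If $j' \in C$, then one of $i, j'$ goes first in the $C$-order; the one who goes second dominates the one who goes first up to that first agent's initial pick, which is in $S$ — so again $v_{j'}(X_i \setminus S) \le v_{j'}(X_{j'})$ after removing at most the relevant element of $S$ from $X_i$.

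The main obstacle, and the step I would write most carefully, is the alignment argument: matching each good in a neighbor's bundle to a weakly-more-preferred good in one's own bundle, correctly handling (i) the interleaving of $C$-agents versus $(G_j \setminus C)$-agents within a round, (ii) the final partial round where some agents run out of picks before others, and (iii) the bookkeeping of exactly which goods land in $S$ and why $X_{j'} \setminus S$ (not $X_{j'}$) is the right object on the heavy side of the inequality. Everything else is the textbook round-robin EF1/envy analysis plus additivity; the graph only enters through the observation that every edge is covered by $C$, so we only ever need the inequality in the ``favorable'' direction where the $C$-agent picked first.
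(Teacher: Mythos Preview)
Your approach is the same as the paper's: hide each $C$-agent's first pick, observe that edges outside $G_j$ are trivial, and verify the remaining edges by the standard round-robin pairing argument, casing on whether the non-covering endpoint is itself in $C$. One slip to fix in part (a): the claim that $j'$ ``never picks before $i$ does in a round where both are active,'' and hence $v_i(h_r) \le v_i(g_i^{(r)})$, is false when $j' \in C$ precedes $i$ in the $C$-order --- there the correct bound is the offset one $v_i(h_r) \le v_i(g_i^{(r-1)})$ for $r \ge 2$, exactly the shift you already use in part (b). The paper sidesteps this by fixing the covering endpoint first (call it $i'$) and then casing on whether the \emph{other} endpoint $i$ lies in $C$; that way the pairing is always written from the perspective of the agent who picks later in the round, and the offset is uniform.
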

\begin{proof}[Proof Sketch]
On hiding the first $k$ items allocated in Algorithm \ref{alg:hidden-envy-algo}, each agent thinks of themselves as having been the first to pick every round in a round robin protocol. Vertices outside $C$ are independent, so all edges are accounted for.
\end{proof}

Note that this means that on graphs very large independent sets, we can find $G$-uHEF-$k$ allocations for small $k$. In particular, on a star, this means we can always find a $G$-uHEF-$1$ allocation. Note also that Algorithm \ref{alg:hidden-envy-algo} runs in polynomial time \emph{given} $C$. If we are not given $C$, then finding a vertex cover in $G$ of size $k$ is a canonical NP-complete problem.

Even though finding the minimum vertex cover is hard, it is worth noting that there is a simple 2-approximation algorithm (folklore) that can be used to compute an approximate minimum vertex cover; this approximate minimum vertex cover can then be used to compute a $G$-uHEF-$k'$ allocation (via Theorem \ref{thm:hiddenvertexcover}). For graphs with small vertex covers, this procedure outputs an allocation which hides significantly fewer goods than the previous best guarantee of $n-1$ \citep{hosseini2020}.

We now show that Theorem \ref{thm:hiddenvertexcover} is tight, in the following sense.

\begin{restatable}{theorem}{thmhiddentight}\label{thm:hiddentight}
Let $G = (N, E)$ be an undirected graph with connected components $\{G_1, G_2, \dots, G_u\}$. Let $C_1, C_2, \dots, C_u$ be minimum vertex covers for $G_1, G_2, \dots, G_u$. Let $k = \min_{t \in [u]} |C_t|$. Then, there is an instance of agents and goods defined on $G$ (with additive valuation functions) such that there is no $G$-HEF-$k'$ allocation for $k' < k$.
\end{restatable}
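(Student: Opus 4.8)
The plan is to construct, on each connected component $G_t$ of $G$, an instance where the natural obstruction to hiding fewer than $|C_t|$ goods is the simultaneous need to ``appease'' many agents who are all mutually adjacent — forcing many distinct goods to be hidden. Concretely, pick the component $G_t$ realizing the minimum $k = |C_t|$, and work inside it; on all other components, assign every agent an empty bundle and a trivial valuation, so that they contribute nothing to the hiding requirement and the bound is governed entirely by $G_t$. So WLOG I will assume $G$ is connected with a minimum vertex cover $C$ of size $k$.

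The key observation is a lower-bound lemma connecting vertex covers to hidden envy: if $I = N \setminus C$ is the complementary independent set, then $I$ may contain ``large'' structure, but the point of König-type reasoning is the other direction — I want a set of $k$ agents that are pairwise forced to envy each other under any allocation with few goods. The cleanest route: take the instance where all agents in the component have \emph{identical additive valuations}, and there are exactly... not quite — identical valuations only force a path-like chain of envy. Instead I would build the instance around a maximum matching. By König's theorem, a minimum vertex cover of size $k$ in a connected graph need not come with a matching of size $k$ (that's only true in bipartite graphs), so I should \emph{not} assume bipartiteness. The robust approach is: give the $m = k$ goods and set up valuations so that in \emph{any} allocation, the set of agents receiving a good, together with the at-most-one agent receiving nothing among any edge, forces the hidden set $S$ to intersect $k$ distinct bundles — and moreover $|S \cap X_i| \le 1$ is irrelevant here since we are proving a bound for general HEF-$k'$, not just uHEF. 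Precisely: with $m = k$ goods all valued identically at $1$ by everyone in the component, any allocation gives each good to some agent; along every edge $(i,j)$ where $i$ has strictly fewer goods than $j$, eliminating envy requires hiding $|X_j| - |X_i| \ge 1$ of $j$'s goods. I then need to argue that \emph{no} allocation of $k$ identical goods to the agents of $G_t$ can avoid needing to hide all $k$ — which fails if, e.g., the goods can be spread so that adjacent agents get equal amounts. So identical valuations alone are not enough, and the real construction must use the vertex-cover structure more carefully.

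The correct construction, which I expect the authors use: let $C = \{a_1, \dots, a_k\}$ be the minimum vertex cover of the component $G_t$; introduce $k$ goods $g_1, \dots, g_k$; and choose valuations so that the \emph{unique} (up to the forced structure) way to have no un-hidden edge-envy is to give $g_i$ to $a_i$ and then every good must be hidden. The mechanism: make each $a_i$ value $g_i$ hugely and everything else negligibly, while each agent in the independent set $I$ values \emph{all} goods equally and positively. Minimality of $C$ guarantees (this is the crucial combinatorial input) that every $a_i$ has a neighbor — in fact a neighbor in $I$, or else $C - a_i$ would still be a cover — call it $b_i \in I$ with $(a_i, b_i) \in E$. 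Now in any allocation: if some $a_i$ does not receive $g_i$, whoever holds $g_i$ is strongly envied by $a_i$ along a path to that holder, or $a_i$ gets nothing and is envious of $b_i$ unless $b_i$'s goods are hidden; tracing these forced envies shows that for each $i$, at least one good in $a_i$'s ``zone'' must be hidden, and these zones are disjoint, yielding $|S| \ge k$. The main obstacle — and the step I would spend the most care on — is exactly this disjointness/charging argument: showing that the $k$ units of required hiding cannot be ``shared'' across edges, i.e. that one hidden good cannot simultaneously kill the envy of two different vertex-cover agents. Resolving it hinges on choosing the valuation gaps large enough (a geometric scale $v_{a_i}(g_i) = 1$, $v_{a_i}(g_j) = \epsilon$ for $j \ne i$) that each $a_i$'s envy can only be neutralized by hiding a good from a specific agent's bundle, and on using minimality of $C$ to ensure each such ``specific agent'' is distinct. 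Once the charging is set up, the conclusion $k' \ge k$ is immediate, and extending from the component to all of $G$ is trivial since the other components were made vacuous.
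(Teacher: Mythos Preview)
Your proposal has two genuine gaps. First, the reduction to a single component is wrong: you set trivial (zero) valuations on all components other than $G_t$, but you do not get to choose the allocation --- the adversary does. An allocation that places every good in some other component then has zero envy everywhere (agents in $G_t$ all hold $\varnothing$, and agents elsewhere value everything at zero), so $S = \varnothing$ already works and no lower bound follows. Second, the ``charging argument'' you flag as the main obstacle is never carried out, and as sketched it does not go through: with only $k$ goods tailored to a fixed minimum vertex cover $C = \{a_1, \dots, a_k\}$, an arbitrary allocation need not respect $C$; the holder of $g_i$ need not be adjacent to $a_i$ (and $G$-HEF constrains edges, not paths); the neighbours $b_i \in I$ need not be distinct, so one hidden good can in principle neutralize envy on several incident edges at once; and the branch ``$a_i$ gets nothing and is envious of $b_i$'' fails whenever $b_i$ also gets nothing, which is generic when there are only $k$ goods and many agents.

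You dismissed identical valuations too quickly. The paper takes $n^3$ goods and gives \emph{every} agent the identical additive valuation $v(g_j) = 1 + 2^{-j}$; this tie-breaking perturbation ensures that no two distinct nonempty bundles have equal value, so your objection that ``the goods can be spread so that adjacent agents get equal amounts'' evaporates. By pigeonhole, in any allocation some component $G'$ receives at least $n^2$ goods. If every agent in $G'$ gets at least one good, then every edge of $G'$ carries envy (no ties), and on each edge the higher-valued endpoint must have a good hidden; hence the set of agents with a hidden good is a vertex cover of $G'$, giving $|S| \ge |C_{G'}| \ge k$. Otherwise some agent in $G'$ gets nothing while another gets at least $n$ goods, and a telescoping sum along the path between them forces $|S| \ge n \ge k$. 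The pigeonhole step is precisely what handles multiple components correctly without your flawed reduction.
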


\begin{proof}[Proof Sketch]
We define $n^3$ goods, and give the agents identical additive valuations, where their value for each good is (roughly) $1 + \varepsilon$ for some sufficiently small $\varepsilon$. Fix an arbitrary allocation $X$. By the pigeonhole principle, some $G_i$ receives at least $n^2$ goods. If every agent in $G_i$ receives at least one good, then in order to cover each edge, we need to hide at least $|C_i|$ goods. Otherwise, there are agents $j, k \in V(G_i)$, with $|X_j| = 0$ and $|X_k| = n$. To eliminate envy along the path from $j$ to $k$, we need to hide $n$ goods.
\end{proof}

Theorem \ref{thm:hiddentight} provides a lower bound for the minimum number of hidden goods that can be guaranteed by an algorithm. It says that we cannot guarantee, for all instances, hiding fewer goods than the size of the minimum vertex cover of the graph. This shows that Theorem \ref{thm:hiddenvertexcover} is tight. 

Note that our upper bound applies to the stronger notion of uHEF, but our lower bound applies to the weaker notion of HEF. Thus, the upper bound and lower bound apply to both definitions of hidden envy --- HEF and uHEF. The tightness of our lower bound also shows that computing an allocation which minimizes the number of hidden goods is NP-hard; this applies to both HEF and uHEF.

\begin{restatable}{corollary}{corhiddenhardness}\label{cor:hidden-hardness}
Given a graphical fair allocation instance on $G$, and an integer $k$, the problems of 
\begin{inparaenum}[(a)]
    \item deciding if a $G$-HEF-$k$ allocation exists, and 
    \item deciding if a $G$-uHEF-$k$ allocation exists
\end{inparaenum}
are NP-complete. 
\end{restatable}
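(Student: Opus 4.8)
The plan is to establish NP-completeness in the usual two steps: membership in NP, then NP-hardness by reduction from the minimum vertex cover decision problem.

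For membership in NP, I would use the allocation $X$ together with the hidden set $S$ (with $|S| \le k$, and in part (b) also with $|S \cap X_i| \le 1$ for all $i$) as the certificate. A verifier checks in polynomial time that $X$ is a valid partition of $M$, that $|S| \le k$, that the uniformity condition holds where required, and that for every edge $(i,j) \in E$ we have $v_i(X_i) \ge v_i(X_j \setminus S)$. This last check takes polynomial time given oracle access to the valuations (or explicit additive values), so both problems are in NP.

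For hardness, I would reduce from \textsf{Vertex Cover}: given a graph $H$ and integer $k$, decide whether $H$ has a vertex cover of size at most $k$. The key observation is that the construction in the proof sketch of Theorem \ref{thm:hiddentight} already does almost all the work. Given $(H, k)$, build the graphical fair allocation instance on $G = H$ with the identical additive valuations from that proof (roughly $n^3$ goods each valued near $1 + \varepsilon$). By Theorem \ref{thm:hiddentight}, on this instance no $G$-HEF-$k'$ allocation exists for $k' < |C|$, where $|C|$ is the size of a minimum vertex cover of $H$ (assuming $H$ connected; if $H$ has several components one works with the component attaining the minimum, or pads suitably). Conversely, by Theorem \ref{thm:hiddenvertexcover}, \textsc{Vertex Cover Round Robin} run with any vertex cover $C$ of $H$ produces a $G$-uHEF-$|C|$ allocation, which is in particular $G$-HEF-$|C|$. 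Hence on this instance a $G$-HEF-$k$ allocation exists if and only if $H$ has a vertex cover of size at most $k$, and the same equivalence holds verbatim for $G$-uHEF-$k$ since the upper bound produces a uHEF allocation and the lower bound rules out even HEF allocations. The reduction is clearly polynomial-time.

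The main obstacle is handling the disconnected case cleanly: Theorem \ref{thm:hiddentight}'s bound is stated in terms of $\min_t |C_t|$ over components, so to encode a vertex cover question about a specific graph $H$ I should ensure $H$ is connected (vertex cover restricted to connected graphs is still NP-complete, e.g. by adding a universal-ish gadget, or simply observing the standard reductions produce connected instances), or else argue directly that the per-component structure still yields the desired equivalence. The remaining details — specifying $\varepsilon$, confirming the valuations are additive and polynomially representable, and checking the $n \ge 4$ conventions do not interfere — are routine.
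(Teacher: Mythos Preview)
Your proposal is correct and follows essentially the same approach as the paper: reduce from \textsc{Vertex Cover}, instantiate the allocation problem with the identical additive valuations from Theorem~\ref{thm:hiddentight}, and combine Theorem~\ref{thm:hiddenvertexcover} (upper bound, giving a uHEF and hence HEF allocation) with Theorem~\ref{thm:hiddentight} (lower bound, ruling out even HEF) to get the iff. If anything, you are more careful than the paper on two points: you explicitly argue NP membership via the certificate $(X,S)$, and you flag the connectedness subtlety (the paper's proof tacitly treats $G$ as connected so that $\min_t |C_t|$ coincides with the size of a minimum vertex cover of $G$); your suggested fix of restricting to connected instances of \textsc{Vertex Cover} is the standard and correct way to handle this.
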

Finally, we remark that despite the tightness in Theorem \ref{thm:hiddentight}, there are instances of the problem where the minimum $k$ admitting a $G$-uHEF-$k$ allocation is bounded away (with an arbitrarily large gap) from the size of a minimum vertex cover. 

\begin{restatable}{proposition}{hiddennottight}\label{prop:hiddennottight}
For any $n \ge 3$, there is a graph $G = (N, E)$ and an instance of the allocation problem with $m = n$ that admits a $G$-uHEF-$2$ allocation, while every vertex cover of $G$ has size $\Theta(n)$.
\end{restatable}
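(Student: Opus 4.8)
The plan is to exhibit an explicit graph $G$ together with an instance in which only two goods ever need to be hidden, while forcing the minimum vertex cover to have size $\Theta(n)$. The natural candidate for the graph is one that simultaneously has a large minimum vertex cover yet admits allocations that are ``almost'' envy-free along every edge. I would take $G$ to be a \emph{perfect matching} on $n$ vertices (assume $n$ even, with the obvious fix for odd $n$: one extra isolated or pendant vertex), i.e. $n/2$ disjoint edges. Any vertex cover must include at least one endpoint of each of the $n/2$ edges, so every vertex cover has size at least $n/2 = \Theta(n)$, giving the second half of the statement immediately. (Note that, per the first observation in Section \ref{subsec:efx}, Theorem \ref{thm:hiddenvertexcover} applied to a single component only gives $G$-uHEF-$1$ here, so that bound is already small — to get a nontrivial proposition I should instead make the per-component vertex cover large too, so I would connect the matching into a single component, e.g. by making $G$ a long path $P_n$ or a cycle $C_n$, whose minimum vertex cover is $\lceil n/2\rceil = \Theta(n)$, but which still has useful independent-set structure. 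I will go with $G = C_n$ or $P_n$.)

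The instance: give all $n$ agents \emph{identical additive} valuations over $m = n$ goods $g_1, \dots, g_n$, where agent values are, say, $v(g_1) = v(g_2) = 1$ and $v(g_j) = \varepsilon$ for $j \geq 3$ and a suitably small $\varepsilon > 0$ (small enough that $1 > (n-2)\varepsilon$, i.e. $\varepsilon < 1/(n-2)$). Now I would \emph{2-color} the vertices of $P_n$ (or $C_n$ with $n$ even) properly with colors $A, B$, so that every edge joins an $A$-vertex to a $B$-vertex and each color class is an independent set. Allocation $X$: give $g_1$ to one designated $A$-vertex and $g_2$ to one designated $B$-vertex; distribute the remaining $n-2$ low-value goods one per remaining agent (there are exactly $n-2$ of them and $n-2$ remaining agents, so every agent ends up with exactly one good). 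Hidden set $S = \{g_1, g_2\}$, so $k = 2$ and clearly $|S \cap X_i| \leq 1$ for every $i$, so the allocation is uHEF-$2$-eligible by construction.

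The key step is to verify that hiding $\{g_1, g_2\}$ removes all envy along every edge $(i,j) \in E$. Fix such an edge. After hiding, $v_i(X_j \setminus S)$ is: $0$ if $X_j \subseteq \{g_1, g_2\}$; or $\varepsilon$ if $X_j$ is a single low-value good. Meanwhile $v_i(X_i) \in \{1, \varepsilon\}$ depending on whether $i$ holds a high-value good. The only way envy could survive is if $v_i(X_i) = 0$ after... but $X_i$ is never fully inside $S$ unless $i$ is one of the two designated vertices, and those two hold $g_1$ or $g_2$ (value $1 \geq \varepsilon$), so $v_i(X_i) \geq \varepsilon$ always, while $v_i(X_j \setminus S) \leq \varepsilon$ always. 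Hence $v_i(X_i) \geq v_i(X_j \setminus S)$ on every edge, so $X$ is $G$-uHEF-$2$. Combining with the vertex-cover lower bound $\lceil n/2 \rceil = \Theta(n)$ finishes the proof. The main obstacle — and the only place requiring care — is the bookkeeping that \emph{every} agent receives exactly one good so that no agent is left with an empty bundle facing a neighbor holding a hidden high-value good; choosing $m = n$ and handing out the $n-2$ dummy goods exactly fills the $n-2$ non-designated agents, which is precisely why $m=n$ is the right count in the statement.
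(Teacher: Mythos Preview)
Your construction is correct but takes a genuinely different route from the paper's. The paper builds $G$ as a star $K_{1,\lceil (n-1)/2\rceil-1}$ whose center is joined via a bridge vertex to a clique $K_{\lfloor (n-1)/2\rfloor}$; the clique alone forces the minimum vertex cover to be $\Theta(n)$. For the allocation, the paper runs a picking-sequence protocol (the star center picks first, then the outer star vertices, then the bridge, then the center and some outer vertices pick again), so that the clique agents receive nothing at all. The two hidden goods are the center's first pick and the bridge agent's sole good; the verification then mirrors the round-robin argument of Theorem~\ref{thm:hiddenvertexcover}. Notably, the paper's construction does not rely on specially chosen valuations---the picking sequence works for \emph{any} additive instance with $m=n$ goods on that graph.

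Your approach instead takes $G=P_n$ (or $C_n$), gets the $\Theta(n)$ vertex cover from the matching lower bound, and engineers the valuations directly: two high-value goods and $n-2$ dummies, one good per agent, hide the two high-value goods. The envy check then reduces to the one-line inequality $v_i(X_i)\ge\varepsilon\ge v_i(X_j\setminus S)$. This is arguably cleaner for the proposition as stated: the graph is simpler, every agent gets exactly one good, and the verification is immediate. What you give up relative to the paper is robustness to the instance---your argument needs the specific two-tier valuations, whereas the paper's works for arbitrary additive $v$---but the proposition only asks for existence of \emph{an} instance, so this is no loss.

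Two minor remarks. First, your $2$-coloring step is never used: your final inequality $v_i(X_i)\ge\varepsilon\ge v_i(X_j\setminus S)$ holds for \emph{any} placement of $g_1,g_2$ at two distinct vertices, regardless of color class, so you can drop that detour. Second, your initial worry about the perfect matching is well-motivated given the context (Theorem~\ref{thm:hiddenvertexcover} works component-wise, so a matching would already yield $G$-uHEF-$1$ and the ``gap'' would be uninteresting), and pivoting to a connected graph like $P_n$ is the right call to make the proposition bite against Theorem~\ref{thm:hiddenvertexcover}'s bound.
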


\section{Theoretical Results for EFX on Graphs}\label{sec:theory}


In this section, we show $G$-EFX algorithms on two simple graphs – the star $K_{1, n-1}$, and the path $P_4$, generalize them to more complex classes, and analyze the problem in the presence of chores.

\subsection{The Star and its Generalizations}\label{subsec:stars}
A {\em star} consists of a ``central'' vertex and an arbitrary number of ``outer'' vertices, each with an edge only to the central vertex (e.g., $K_{1, 5}$ in Figure \ref{subfig:star}). We consider stars with $n - 1$ outer vertices, to maintain consistency with the fact that there are $n$ agents in total. We start with a warm-up problem.

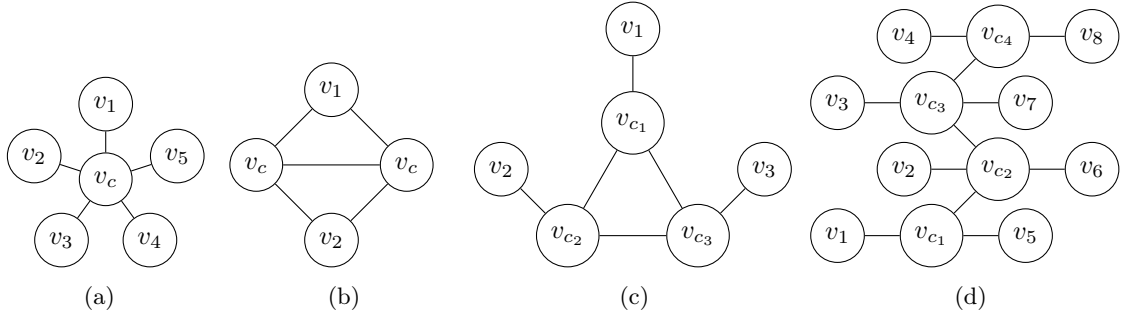
\begin{figure*}
    \centering
    \begin{subfigure}[b]{0.20\textwidth} 
    \hspace*{5pt}
    \begin{tikzpicture}[node distance = 1.25cm]
    \node[circle, draw = black] at (360:0mm) (center) {$v_c$};
	\foreach \n in {1,...,5}{
        \node[circle, draw = black] at ({18+\n*360/5}:1cm) (n\n) {$v_{\n}$};
        \draw (center)--(n\n);
    }
    \end{tikzpicture}
    \caption{}
    \label{subfig:star}
    \end{subfigure}
    \hfill
    \begin{subfigure}[b]{0.20\textwidth} 
    \begin{tikzpicture}[node distance = 1.25cm]
	\node[circle, draw = black] at ({360/4}:1cm) (v1) {$v_{1}$};
	\node[circle, draw = black] at ({180}:1cm) (vc1) {$v_{c}$};
	\node[circle, draw = black] at ({360}:1cm) (vc2) {$v_{c}$};
	\node[circle, draw = black] at ({270}:1cm) (v2) {$v_{2}$};
	\draw (vc1)--(vc2);
	\draw (vc1)--(v1);
	\draw (vc1)--(v2);
	\draw (vc2)--(v1);
	\draw (vc2)--(v2);
    \end{tikzpicture}
    \caption{}
    \label{subfig:star-general-1}
    \end{subfigure}
    \hfill
    \begin{subfigure}[b]{0.28\textwidth} 
    \begin{tikzpicture}[node distance = 1.25cm]
	\node[circle, draw = black] at ({90}:1cm) (vc1) {$v_{c_1}$};
	\node[circle, draw = black] at ({210}:1cm) (vc2) {$v_{c_2}$};
	\node[circle, draw = black] at ({330}:1cm) (vc3) {$v_{c_3}$};
	\node[circle, draw = black, above of = vc1] (v1) {$v_{1}$};
	\node[circle, draw = black, above left of = vc2] (v2) {$v_{2}$};
	\node[circle, draw = black, above right of = vc3] (v3) {$v_{3}$};
	\draw (vc1)--(vc2);
	\draw (vc3)--(vc2);
	\draw (vc3)--(vc1);
	\draw (vc1)--(v1);
	\draw (vc2)--(v2);
	\draw (vc3)--(v3);
    \end{tikzpicture}
    \caption{}
    \label{subfig:star-general-2}
    \end{subfigure}
    \hfill
    \begin{subfigure}[b]{0.28\textwidth}
    \begin{tikzpicture}[node distance = 1.25cm]
	\node[circle, draw = black] (vc1) {$v_{c_1}$};
	\node[circle, draw = black, above right of = vc1] (vc2) {$v_{c_2}$};
	\node[circle, draw = black, above left of = vc2] (vc3) {$v_{c_3}$};
	\node[circle, draw = black, above right of = vc3] (vc4) {$v_{c_4}$};
	\node[circle, draw = black, left of = vc1] (v1) {$v_{1}$};
	\node[circle, draw = black, left of = vc2] (v2) {$v_{2}$};
	\node[circle, draw = black, left of = vc3] (v3) {$v_{3}$};
	\node[circle, draw = black, left of = vc4] (v4) {$v_{4}$};
	\node[circle, draw = black, right of = vc1] (v5) {$v_{5}$};
	\node[circle, draw = black, right of = vc2] (v6) {$v_{6}$};
	\node[circle, draw = black, right of = vc3] (v7) {$v_{7}$};
	\node[circle, draw = black, right of = vc4] (v8) {$v_{8}$};
	\draw (vc1)--(vc2);
	\draw (vc3)--(vc2);
	\draw (vc3)--(vc4);
	\draw (vc1)--(v1);
	\draw (vc2)--(v2);
	\draw (vc3)--(v3);
	\draw (vc4)--(v4);
	\draw (vc1)--(v5);
	\draw (vc2)--(v6);
	\draw (vc3)--(v7);
	\draw (vc4)--(v8);
    \end{tikzpicture}
    \caption{}
    \label{subfig:star-general-3}
    \end{subfigure}
    \caption{An example of a star graph and its generalizations. Each node $i$ in any of the graphs above is labeled by the valuation function $v_i$ of the corresponding agent. The valuation functions $v_{c_1}, v_{c_2}, v_{c_3}$ and $v_{c_4}$ are consistent.}
    \label{fig:star}
\end{figure*}

\begin{restatable}{proposition}{propgefxstar}\label{prop:gefx:star}
For all $n \geq 1$, when $G$ is the star $K_{1, n-1}$, a $G$-EFX allocation exists for agents with general valuations.
\end{restatable}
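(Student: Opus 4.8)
The plan is to separate the two kinds of constraints imposed by a star: the central agent $c$ (the only vertex of degree exceeding $1$) must be EFX towards every outer agent, whereas each outer agent must be EFX only towards $c$ (outer agents are pairwise non-adjacent, so envy among them is irrelevant). I would dispose of the outer agents almost for free, by having them greedily pick bundles from a fixed list of candidates and giving $c$ whatever is left over; and I would dispose of $c$ by making that list an EFX partition \emph{from $c$'s own point of view}, so that $c$ is satisfied regardless of which leftover bundle it ends up holding.

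Concretely, the steps are: (i) produce a partition $(P_1,\dots,P_n)$ of $M$ into $n$ bundles that is an EFX allocation for $n$ agents all having valuation $v_c$, i.e.\ $v_c(P_a)\ge v_c(P_b\setminus\{g\})$ for all $a,b$ and all $g\in P_b$ --- such a partition exists since EFX allocations are known to exist when all agents have identical (monotone) valuations~\citep{efxcanon}, and we call it a \emph{$c$-safe} partition, the point being that in it every bundle is safe against every other, so $c$ may be handed any one of them; (ii) process the $n-1$ outer agents one at a time, each claiming an as-yet-unclaimed bundle $P_{\pi(i)}$ maximizing $v_i$ among the unclaimed ones; (iii) assign the unique remaining bundle $P^\star$ to $c$, setting $X_c:=P^\star$ and $X_i:=P_{\pi(i)}$. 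To check this allocation is $G$-EFX: along an edge $(c,i)$, agent $c$ does not strongly envy $i$ because $X_c$ and $X_i$ are both bundles of the $c$-safe partition, so $v_c(X_c)\ge v_c(X_i\setminus\{g\})$ for every $g\in X_i$; and outer agent $i$ does not strongly envy $c$ because $P^\star$ was never claimed, hence was still available at $i$'s turn, so $v_i(X_i)=v_i(P_{\pi(i)})\ge v_i(P^\star)=v_i(X_c)\ge v_i(X_c\setminus\{g\})$ for every $g\in X_c$, using monotonicity in the final step (in fact $i$ does not even envy $c$). The cases $n=1,2$ degenerate to ``no edges'' and to ordinary cut-and-choose, respectively.

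The sole non-elementary ingredient, and hence the main obstacle, is Step (i): the existence of an $n$-bundle EFX partition with respect to a single \emph{general} valuation $v_c$. This is exactly the known existence of EFX allocations under identical valuations, which for monotone valuations can, if a self-contained argument is preferred, be obtained by exhibiting a leximin-type allocation (with appropriate tie-breaking) and verifying it is EFX; everything following Step (i) is immediate. Finally, when $v_c$ is additive the $c$-safe partition is computable in polynomial time (for instance by running round-robin with $c$ playing all $n$ roles, or by taking a leximin allocation), and the picking phase is clearly polynomial, which additionally yields the efficient-computation claim asserted for additive valuations.
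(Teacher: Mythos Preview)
Your proof is correct and follows essentially the same approach as the paper: compute an EFX partition of $M$ with respect to the central agent's valuation $v_c$ (via \citet{efxcanon}), let the outer agents greedily pick bundles first, and hand the leftover to the center. One minor caveat on your closing aside: plain round-robin does \emph{not} in general produce an EFX partition even for identical additive valuations (e.g., two agents and goods valued $3,1,1,1$ yield bundles $\{3,1\}$ and $\{1,1\}$, where the second agent strongly envies the first), so for the polynomial-time claim you should instead invoke the ``add the next good to the currently least-valued bundle'' procedure of \citet{efxcanon}, not round-robin.
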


\begin{proof}[Proof sketch]
We compute an EFX allocation for $n$ identical copies of the central agent using \citep{efxcanon} to obtain $n$ bundles. The outer agents choose from these, and the last bundle goes to the center.
\end{proof}

Note, immediately, the unique benefit of the graph structure here; it allows us to compute an approximate EFX allocation for $n$ agents using a simple algorithm. 
Proposition \ref{prop:gefx:star} can be generalized to a larger class of graphs, each consisting of a central {\em group} of vertices all having the same valuation function. The remaining vertices can have arbitrary neighborhoods among the central vertices, but cannot have edges among themselves. We will refer to the central group of nodes $N' \subseteq N$ as the {\em core} vertices (or agents), and the nodes $N\setminus N'$ as the {\em outer} vertices. Note that the outer vertices induce an independent set $G[N\setminus N']$. We give a few examples of these graphs in Figure \ref{fig:star}.

\begin{restatable}{theorem}{thmresultone}\label{thm:result-1}
Suppose agents have general valuations, and $G = (N, E)$ consists of a core set of agents $N' \subseteq N$ with identical valuations, with $N\setminus N'$ inducing an independent set in $G$. Then a $G$-EFX allocation is guaranteed to exist.
\end{restatable}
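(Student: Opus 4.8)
The plan is to generalize the argument behind Proposition \ref{prop:gefx:star}. Let $|N'| = p$ be the number of core agents, and let $v^*$ denote their common valuation function. The key observation is that since all core agents share the valuation $v^*$, any partition of $M$ that is EFX \emph{from the perspective of $v^*$} will automatically satisfy the EFX constraint on every edge internal to the core, as well as on every core--core edge regardless of which bundle each core agent ends up with. So I would first invoke \citep{efxcanon}: treat the instance as $p$ identical copies of an agent with valuation $v^*$ (this is exactly the ``EFX for $\le$ some number of identical/consistent agents'' result the paper already relies on), obtaining a partition $B_1, \dots, B_p$ of $M$ into $p$ bundles such that for all $a, b$, $v^*(B_a) \ge v^*(B_b - g)$ for every $g \in B_b$.

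Next I need to decide which core agent gets which $B_a$, and what the outer agents receive. The natural move, mirroring the star proof, is to have the outer agents ``claim'' bundles: since $G[N \setminus N']$ is an independent set, outer agents have no edges among themselves, so an outer agent $i$ only needs to be non-strongly-envious toward its core neighbors $\text{Nbd}_G(i) \subseteq N'$. Here is where the counting has to work out: there are $p$ bundles and $n - p$ outer agents plus $p$ core agents, so $n$ agents total but only $p$ bundles — we cannot give every agent its own bundle unless $p = n$. The resolution is that \emph{bundles, not agents, are the scarce resource}, and the EFX condition for an outer agent $i$ is about comparing $X_i$ against $X_j$ for $j \in \text{Nbd}_G(i)$. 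I would proceed greedily/iteratively: process outer agents one at a time; when processing outer agent $i$, look at the bundles currently held by its (already-assigned or to-be-assigned) core neighbors. Actually, the cleanest framing is to first assign bundles to core agents via a matching argument, then let each outer agent pick its \emph{favorite} among the bundles held by its core neighbors — but that over-allocates.

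So the correct structure is: we do \emph{not} give each agent a distinct bundle; instead, think of it as each of the $p$ bundles being a ``slot,'' and we must place all $n$ agents into an allocation. The trick from Proposition \ref{prop:gefx:star} is that outer agents get to \emph{choose}: have each outer agent $i$, in some order, select a bundle; but to keep it an allocation we must ensure bundles aren't double-assigned. The actual mechanism I would use: build a bipartite-style assignment where we let outer agents choose bundles greedily in a way that each outer agent $i$ ends up with a bundle it weakly prefers (under $v_i$) to at least ``all but one'' of the bundles, hence weakly prefers after removing any single good — and then core agents absorb the leftover bundles arbitrarily (they're fine by the $v^*$-EFX property). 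Formally: let each outer agent $i$ rank the $p$ bundles by $v_i$; we want a system of distinct representatives so that each outer agent gets one of its top $\lvert\text{Nbd}_G(i)\rvert$... no — we want each outer agent $i$ to get a bundle that dominates (EFX-wise) every bundle held by a core neighbor. The simplest sufficient condition: give outer agent $i$ a bundle that is a $v_i$-favorite among \emph{all} $p$ bundles; an EFX-favorite bundle is never strongly envied by its holder. Since multiple outer agents may want the same favorite, route this through Hall's theorem on the bipartite graph between outer agents and bundles (outer agent $i$ joined to its $v_i$-maximal bundle(s)); if Hall's condition holds we get a matching, the matched core bundles go to core agents, unmatched bundles also go to core agents, done. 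If Hall fails, some set $T$ of outer agents all share a small set of favorite bundles — then I would argue we can instead give those bundles to core agents and give the outer agents in $T$ other bundles, using that an outer agent not strongly envying a \emph{specific} neighbor's bundle is a weaker requirement than not envying the global favorite. I expect \textbf{this matching/Hall step — reconciling the fixed set of $p$ bundles with $n$ agents while respecting each outer agent's neighborhood constraints — to be the main obstacle}; the core--core edges and the invocation of \citep{efxcanon} are routine, and the independence of $G[N\setminus N']$ is exactly what makes the outer-agent constraints decouple.

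I would close by checking all edge types: (i) core--core edges: both endpoints evaluate bundles via $v^*$, and $B_1, \dots, B_p$ is $v^*$-EFX, so no strong envy; (ii) core--outer edges $(c, i)$ with $c \in N'$, $i$ outer: outer agent $i$ holds a $v_i$-EFX-maximal bundle (or, in the Hall-failure repair, a bundle we've explicitly arranged to dominate $c$'s bundle), so $i$ does not strongly envy $c$; and we must \emph{also} check $c$ does not strongly envy $i$ — but $c$ evaluates via $v^*$ and $X_i$ is one of the $v^*$-EFX bundles, so this holds too; (iii) there are no outer--outer edges. Hence $X$ is $G$-EFX. \qed
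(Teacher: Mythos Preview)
Your proposal has a genuine gap rooted in a single early choice: you partition $M$ into only $p = |N'|$ bundles (one per core agent), and then spend the rest of the argument wrestling with the resulting shortage of bundles relative to the $n$ agents. This shortage is real and not repairable by the Hall-type argument you sketch. If $n - p > p$ (i.e., more outer agents than core agents), there are not even enough bundles to give one to each outer agent, let alone to arrange that each outer agent receives a bundle it weakly prefers to every core neighbor's bundle. Giving some outer agents the empty bundle does not help either, since an outer agent holding $\varnothing$ will typically strongly envy any core neighbor holding a bundle with two or more goods. Your ``Hall-failure repair'' is left as a hope rather than an argument, and in the regime $n > 2p$ no such repair can exist with only $p$ bundles.

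The fix, which is exactly what the paper does, is to produce \emph{$n$} bundles rather than $p$: invoke \citep{efxcanon} on $n$ identical copies of an agent with valuation $v^*$, obtaining $Y = (Y_1, \ldots, Y_n)$ that is EFX with respect to $v^*$. Now the counting is trivial. Let the outer agents, in any order, each pick their favorite remaining bundle from $\{Y_1, \ldots, Y_n\}$; then hand the leftover bundles to the core agents arbitrarily. Every outer agent picked before every core agent, so no outer agent envies any core neighbor at all. Every core agent evaluates via $v^*$, and every bundle in the allocation is one of the $Y_j$'s, so no core agent strongly envies anyone. There are no outer--outer edges. The ``main obstacle'' you identified disappears entirely once you start with $n$ bundles.
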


\begin{proof}[Proof sketch]
We compute an allocation for $n$ identical copies of any core vertex using \citep{efxcanon}, and then have the outer agents choose from the resulting bundles before the core agents choose theirs.
\end{proof}

\begin{restatable}{corollary}{resultonecorollary}\label{corr:4-agents}
Suppose there are four agents with general valuation functions, with two of them having identical valuations. Then, an allocation exists that is EFX for all but possibly one pair of agents.
\end{restatable}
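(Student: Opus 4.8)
The plan is to derive this as an almost immediate consequence of Theorem~\ref{thm:result-1}. Label the four agents $1,2,3,4$, and suppose without loss of generality that agents $1$ and $2$ are the ones with identical valuation functions. The single pair we will allow to violate EFX is $\{3,4\}$. Accordingly, let $G=(N,E)$ be the graph on vertex set $N=\{1,2,3,4\}$ with edge set $E=\{(1,2),(1,3),(1,4),(2,3),(2,4)\}$ --- that is, $K_4$ with the edge $(3,4)$ deleted.

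Next I would verify that $G$ together with this instance satisfies the hypotheses of Theorem~\ref{thm:result-1}. Take the core set to be $N'=\{1,2\}$; by assumption these two agents have identical valuations. The outer set is $N\setminus N'=\{3,4\}$, and since $(3,4)\notin E$, the induced subgraph $G[\{3,4\}]$ has no edge, so $N\setminus N'$ induces an independent set in $G$. The valuation functions are general, as required. Hence Theorem~\ref{thm:result-1} applies and yields a $G$-EFX allocation $X$.

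Finally I would translate this back into the language of the corollary. By definition, $X$ being $G$-EFX means that for every edge $(i,j)\in E$, agent $i$ does not strongly envy agent $j$, and the only pair of agents not joined by an edge of $G$ is $\{3,4\}$. Therefore $X$ is EFX for all pairs among the four agents except possibly the pair $\{3,4\}$, which is exactly the claim.

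I do not expect any real obstacle here; the only point worth noting is that Theorem~\ref{thm:result-1} imposes no restriction on edges \emph{within} the core or \emph{between} the core and the outer vertices, so it is harmless that $1$ and $2$ are adjacent and that each of $1,2$ is adjacent to each of $3,4$. If one wanted a self-contained argument instead of citing the theorem, one could directly run the star-type procedure from its proof sketch: compute an EFX partition into four bundles for four identical copies of agent $1$ (via \citep{efxcanon}), let agents $3$ and $4$ each pick a most preferred remaining bundle, and give the two leftover bundles to agents $1$ and $2$; the absence of strong envy along every edge other than $(3,4)$ then follows exactly as in that proof.
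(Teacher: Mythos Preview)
Your proposal is correct and follows essentially the same approach as the paper: you take the two identical agents as the core, build the graph $K_4$ minus the edge between the two outer agents (this is precisely the graph in Figure~\ref{subfig:star-general-1}), and invoke Theorem~\ref{thm:result-1}. The paper's own proof is nothing more than this observation.
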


\begin{proof}[Proof sketch]
Take the identical agents as the core, and the other agents $\{u_1, u_2\}$ adjacent to both core vertices, and apply Theorem \ref{thm:result-1}. This is EFX on all pairs except possibly $\{u_1, u_2\}$.
\end{proof}

Proposition \ref{prop:gefx:star} also admits a second generalization for additive valuations which mildly relaxes the requirement that all the core vertices $N'$ have the same valuation function, but places an additional restriction on the outer vertices. Now, core vertices only need to have \emph{consistent} valuation functions, i.e., the same (weak) ranking over the set $M$ of goods. However, each outer vertex can only have a neighborhood in $N'$ with identical valuation functions.

\begin{restatable}{theorem}{thmresulttwo}\label{thm:result-2}
Under additive valuations, let $G = (N, E)$ consist of a core set of agents $N' \subseteq N$ with consistent valuations. Let $N' = N'_1 \sqcup \ldots \sqcup N'_K$, where all agents in $N'_k$ have the same valuation function $v_k$. Suppose every $i \in N \setminus N'$ has its neighborhood $\mathrm{Nbd}_G(i)\subseteq N'_k$ for some $k$. Then a $G$-EFX allocation is guaranteed to exist.
\end{restatable}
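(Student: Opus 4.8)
The plan is to reduce to the known fact, cited in Section~\ref{subsec:efx}, that any $n$ agents with consistent additive valuations admit an EFX allocation. The trick is to first \emph{lift} each outer agent into a clone of the core valuation governing its neighbourhood: if $i \in N \setminus N'$ has $\mathrm{Nbd}_G(i)\subseteq N'_k$, pretend $i$ has valuation $v_k$. After this lift all $n$ agents become consistent, so the cited result yields an EFX allocation of the whole instance; we then let the \emph{real} outer agents re-pick bundles, but only among those that stayed ``within their group'', so that the EFX guarantees already present on the core side survive. By Section~\ref{subsec:efx} we may assume $G$ is connected with $n \ge 4$, and by merging core groups that share a valuation function we may assume $v_1,\dots,v_K$ are pairwise distinct. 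For each outer agent $i$ fix an index $k$ with $\mathrm{Nbd}_G(i)\subseteq N'_k$ (one exists, and an outer agent with no neighbours can be assigned arbitrarily), let $O_k$ collect the outer agents assigned index $k$, and set $G_k := N'_k \sqcup O_k$, so that $N = \bigsqcup_{k} G_k$.

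Next I would form the auxiliary instance $I'$ on agent set $N$ in which every agent of $G_k$ is given valuation $v_k$; all of these valuations are consistent with one another, so $I'$ has an EFX allocation $A = (A_a)_{a \in N}$ by the cited result. To build the final allocation $X$, for each $k$ let the pool $P_k := \{A_a : a \in G_k\}$ be the set of $|G_k|$ bundles available to group $k$, and redistribute $P_k$ among the agents of $G_k$ as follows: the outer agents in $O_k$ choose in sequence, each taking a remaining bundle of $P_k$ of maximum value under its true valuation, and then the core agents of $N'_k$ take the remaining $|N'_k|$ bundles in any order. Since $A$ is an allocation of $M$, doing this for every $k$ yields a valid allocation $X$ of $M$, and $X$ only permutes the bundles of $A$ among agents.

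To see that $X$ is $G$-EFX, note that every edge joins two core agents or a core agent to an outer neighbour, the outer agents forming an independent set; I would check both orientations of an arbitrary edge $(a,b)$. If $a$ is core, say $a \in N'_k$, then $X_a \in P_k$ was held in $A$ by an agent of valuation $v_k$, so EFX of $A$ gives $v_k(X_a) \ge v_k(B - o)$ for every bundle $B$ of $A$ and every $o \in B$; since $X_b$ is one of the bundles of $A$, this applies with $B = X_b$, and $a$ does not strongly envy $b$. If instead $a$ is outer, then by the choice of its index $b \in \mathrm{Nbd}_G(a) \subseteq N'_k$ with $a \in O_k$, and $a$ picked its bundle from $P_k$ strictly before any agent of $N'_k$, so $a$ weakly prefers $X_a$ to $X_b$ under its true valuation and does not (strongly) envy $b$. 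This handles every edge, so $X$ is $G$-EFX.

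The step needing the most care is the observation underpinning the ``core'' case: a bundle received by a valuation-$v_k$ agent in an EFX allocation stays EFX-undominated for $v_k$ regardless of which other valuation-$v_k$ agent ends up holding it, which is exactly what makes the intra-group reshuffle safe. One must also confirm that re-picking with the outer agents' true valuations cannot manufacture strong envy on the core side (it cannot, for the same reason), and keep the bookkeeping so that every outer agent's edges remain inside a single group; handling core groups with coincident valuations — by merging them at the outset — is the only remaining wrinkle.
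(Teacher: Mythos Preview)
Your proposal is correct and follows essentially the same approach as the paper: lift each outer agent to its core neighbours' valuation, invoke the existence of EFX for consistent additive valuations, then let outer agents re-pick from their group's pool before core agents receive the leftovers. Your write-up is a bit more explicit in the bookkeeping (defining $O_k$, $G_k$, $P_k$, and handling coincident core valuations), but the argument is the same.
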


\begin{proof}[Proof sketch]
We imagine each outer agent has the same valuation as its core neighbors, and compute an allocation on that instance using \citep{efxcanon}. Each outer agent then chooses a bundle (from the relevant pool) before their core neighbors do.
\end{proof}

Theorem \ref{thm:result-2} implies Theorem \ref{thm:result-1} when agents have additive valuations, simply by taking $K = 1$. While Theorem \ref{thm:result-2} applies to several interesting and natural graphs (see Figures \ref{subfig:star-general-2} and \ref{subfig:star-general-3}, e.g.), a limiting constraint is that each vertex in $N\setminus N'$ can only have identical neighbors in $N'$. One might be tempted to simultaneously generalize Theorems \ref{thm:result-1} and \ref{thm:result-2} (in the additive case) and try to prove the existence of $G$-EFX allocations in graphs where the outer agents have arbitrary neighborhoods among the consistent core agents. The problem with this approach is one of ambiguity between which core agents an outer agent should ``act'' like in order to use the result in \citep{efxcanon}. The following example illustrates this difficulty.

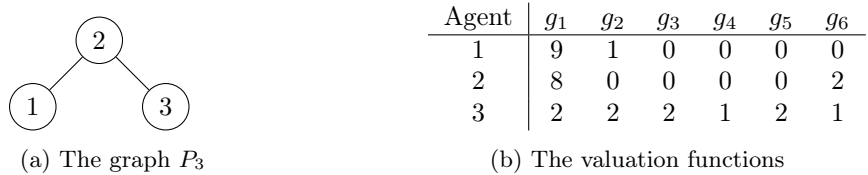
\begin{figure}
    \centering
    \begin{minipage}[b]{\linewidth}
    \begin{subfigure}[b]{0.45\textwidth}
        \hspace*{55pt}
        \begin{tikzpicture}[node distance = 1.25cm]
            \node[circle, draw = black] (x){$1$};
            \node[circle, draw = black, above right of = x]  (y){$2$};
            \node[circle, draw = black, below right of = y]  (z){$3$};
            \draw (x)--(y);
            \draw (y)--(z);
        \end{tikzpicture}
        \caption{The graph $P_3$}
        \label{fig:counterexample-graph}
    \end{subfigure} 
    \begin{subfigure}[b]{0.45\textwidth}
        \hspace*{15pt}
        \begin{tabular}{c | c c c c c c }
             Agent & $g_1$ & $g_2$ & $g_3$ & $g_4$ & $g_5$ & $g_6$\\
             \hline
             $1$ & 9 & 1 & 0 & 0 & 0 & 0 \\ 
             $2$ & 8 & 0 & 0 & 0 & 0 & 2\\ 
             $3$ & 2 & 2 & 2 & 1 & 2 & 1\\
        \end{tabular}
        \caption{The valuation functions}
        \label{tab:counterexample-valuations}
    \end{subfigure} 
    \end{minipage}
    \caption{The allocation instance in Example \ref{ex:counterexample}}
    \label{fig:counterexample}
\end{figure}

\begin{example}\label{ex:counterexample}
Consider the allocation instance with three agents $\{1, 2, 3\}$ and six goods $\{g_1, \dots, g_6\}$ defined over $P_3$ (see Figure \ref{fig:counterexample-graph}). Suppose the agents have additive valuations, given in Figure \ref{tab:counterexample-valuations}. 

Agents $1$ and $3$ have (weakly) consistent valuations, but they have a common neighbor $2$ whose valuation is inconsistent with them. If we try to follow the same proof as Theorem \ref{thm:result-2}, we would first compute an EFX allocation for consistent valuations and then reallocate the bundles by letting $2$ pick first. However, we have to first choose a valuation profile to give agent $2$. If we give agent $2$ the same valuation as agent $3$, we can easily find an EFX allocation for consistent valuations $(v_1, v_3, v_3)$: e.g. $Y = (Y_1, Y_2, Y_3) = (\{g_1, g_2\}, \{g_3, g_4\}, \{g_5, g_6\})$. If we now give agent $2$ their highest-valued bundle between $Y_2$ and $Y_3$, we get the final allocation $X = (\{g_1, g_2\}, \{g_5, g_6\}, \{g_3, g_4\})$. However this is not $G$-EFX since agent $2$ strongly envies agent $1$.

The problem arose when agent $2$ was asked to pick between $Y_2$ and $Y_3$, and not allowed to choose agent $Y_1$. If we modified the algorithm and let agent $2$ pick $Y_1$, no matter how we distribute the remaining bundles, agent $1$ would always strongly envy agent $2$.
\end{example}

We remark that the instance in Example \ref{ex:counterexample} admits a $G$-EFX allocation; Proposition \ref{prop:gefx:star} implies this trivially. The obstacle, therefore, arises from the choices made in our algorithm.

An important question when discussing EFX allocations is that of computational efficiency. This is known to be an intractable problem in general; \citet{efxcanon} show that computing EFX allocations even when the valuations are identical has exponential query complexity. 
However, we show that for additive valuations, the $G$-EFX allocations from Theorem \ref{thm:result-1} and \ref{thm:result-2} can be computed in polynomial time. 

\begin{restatable}{theorem}{thmtimecomplexityone}\label{thm:time-complexity-1}
When agents have additive valuations, the $G$-EFX allocations in Theorems \ref{thm:result-1} and \ref{thm:result-2} can be computed in $O(m n^3)$ time, where $n$ and $m$ are the number of agents and number of goods respectively.
\end{restatable}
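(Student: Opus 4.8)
The plan is to show that the only expensive step in the constructions of Theorems~\ref{thm:result-1} and~\ref{thm:result-2} is computing an EFX allocation for a collection of $n$ identical additive agents, and that this can be done in $O(mn^3)$ time, after which the reallocation (the outer agents each picking a favorite bundle from a pool, then core agents taking what remains) costs only $O(mn^2)$, which is dominated. So it suffices to give an $O(mn^3)$ algorithm for the identical-additive case.

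For identical additive agents, I would use a draft-and-repair scheme. First sort the $m$ goods in decreasing value order in $O(m\log m)$ time (this is dominated by $O(mn)$ since we may assume $m\ge n$ after discarding the trivial case). Then run an ``envy-graph'' style procedure: maintain bundles $Y_1,\dots,Y_n$, and repeatedly, while some agent strongly envies another, perform a swap/transfer that provably increases a potential function (e.g., the sorted vector of bundle values under the common valuation $v$, compared lexicographically, or the classic potential used in the cut-and-choose EFX argument of \citep{efxcanon} for identical valuations). Concretely, the algorithm of \citep{efxcanon} for identical valuations proceeds by, at each step, identifying the least-valued bundle and moving a good into it from an offending bundle whenever doing so does not create new strong envy; each such step strictly decreases the number of goods held by ``rich'' bundles or strictly raises the minimum bundle value, so the number of iterations is polynomial. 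I would argue each iteration touches $O(n)$ bundles and does $O(1)$ value comparisons per bundle (since values are additive and maintained incrementally), for $O(n)$ work per iteration, and bound the iteration count by $O(mn^2)$ via the potential; this already gives $O(mn^3)$ total. The key bookkeeping point is that all value queries are $O(1)$ because valuations are additive and we cache bundle sums.

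The reallocation step is then straightforward to analyze: in Theorem~\ref{thm:result-1} each of the $\le n$ outer agents scans the $\le n$ candidate bundles and evaluates each in $O(m)$ time (additive), so $O(mn^2)$ overall; in Theorem~\ref{thm:result-2} the outer agents are partitioned by which block $N'_k$ their neighborhood lies in, and within each block we run the same ``outer agents pick first from the block's bundle pool'' step, again $O(mn^2)$ in total across all blocks since the pools are disjoint. Correctness of these steps is inherited from the proof sketches of Theorems~\ref{thm:result-1} and~\ref{thm:result-2}; here we only need to check the running time, which is $O(mn^3)$ from the identical-valuation subroutine plus $O(mn^2)$, i.e. $O(mn^3)$.

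The main obstacle I anticipate is pinning down the precise polynomial iteration bound for the identical-additive EFX subroutine and making sure the per-iteration cost is genuinely $O(n)$ rather than $O(mn)$; this requires being careful that we never re-scan all goods inside a bundle (we must maintain, for each bundle, its cached value and, where needed, its minimum-valued good, updated in $O(\log m)$ or amortized $O(1)$ time using the global sorted order). A secondary subtlety is the case $m < n$: then some bundles are forced to be empty, the construction still works, and the bound $O(mn^3)$ is trivially met (indeed the instance is essentially trivial), so I would dispatch this as a preliminary remark. Everything else is routine accounting.
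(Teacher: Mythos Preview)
Your overall decomposition matches the paper's proof exactly: establish an $O(mn^3)$ bound for computing the initial EFX allocation $Y$, then observe that the reallocation phase (each agent scanning at most $n$ bundles, each evaluable in $O(m)$ by additivity) costs $O(mn^2)$. The paper does not re-derive the subroutine bound at all; it simply cites the $O(mn^3)$ algorithm of \citet{efxcanon} for agents with \emph{consistent} additive valuations, so your attempt to sketch the potential-function argument is already more than the paper does.

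There is, however, a genuine gap in your reduction. You claim it suffices to handle the \emph{identical}-additive case, but this only covers Theorem~\ref{thm:result-1}. In Theorem~\ref{thm:result-2} the core agents $N'=N'_1\sqcup\cdots\sqcup N'_K$ have \emph{consistent} but not identical valuations, and the initial allocation $Y$ is obtained by assigning each outer agent the valuation of its (identical) core neighbors and then running the EFX algorithm for $n$ agents with consistent additive valuations on the whole instance at once; there is no decomposition into $K$ separate identical-valuation subproblems. So the subroutine whose running time you must bound is EFX for consistent additive valuations, not identical ones. The \citet{efxcanon} result covers exactly this and gives $O(mn^3)$, which is what the paper invokes; your identical-valuations sketch would need to be upgraded to the consistent setting (or simply replaced by the citation) before it handles Theorem~\ref{thm:result-2}.
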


\subsection{The Three-Edge Path and its Generalization}\label{subsec:paths}
We now move to another simple graph that is the starting point for our next set of results: the three-edge path graph $P_4$.

\begin{figure*}
    \centering
    \begin{tikzpicture}[node distance = 1.25cm]
    \node[circle, draw = black] (v1) {$v_1$};
	\node[circle, draw = black, below right of = v1] (v2) {$v_2$};
	\node[circle, draw = black, above right of = v2] (v3) {$v_3$};
	\node[circle, draw = black, below right of = v3] (v4) {$v_4$};
	\draw (v1)--(v2);
	\draw (v2)--(v3);
	\draw (v3)--(v4);
    \end{tikzpicture}
    \qquad
    \begin{tikzpicture}[node distance = 1.25cm]
	\node[circle, draw = black] (v1) {$v_{1}$};
	\node[circle, draw = black, above right of = v1] (v2) {$v_{2}$};
	\node[circle, draw = black, below right of = v2] (v3) {$v_{2}$};
	\node[circle, draw = black, above right of = v3] (v4) {$v_{3}$};
	\node[circle, draw = black, below right of = v4] (v5) {$v_{4}$};
	\draw (v1)--(v2);
	\draw (v3)--(v2);
	\draw (v3)--(v4);
	\draw (v4)--(v5);
    \end{tikzpicture}
    \\
    \vspace{0.5cm}
    \begin{tikzpicture}[node distance = 1.25cm]
    \node[circle, draw = black] (v1) {$v_1$};
	\node[circle, draw = black, above right of = v1] (v2) {$v_2$};
	\node[circle, draw = black, right of = v2] (v3) {$v_3$};
	\node[circle, draw = black, below right of = v3] (v4) {$v_4$};
	\node[circle, draw = black, above left of = v2] (v5) {$v_5$};
	\node[circle, draw = black, above right of = v3] (v6) {$v_6$};
	\draw (v1)--(v2);
	\draw (v5)--(v2);
	\draw (v2)--(v3);
	\draw (v3)--(v4);
	\draw (v3)--(v6);
    \end{tikzpicture}
    \qquad
    \begin{tikzpicture}[node distance = 1.25cm]
    \node[circle, draw = black] (v1) {$v_1$};
	\node[circle, draw = black, right of = v1] (v2) {$v_2$};
	\node[circle, draw = black, below of = v1] (v12) {$v_1$};
	\node[circle, draw = black, right of = v12] (v22) {$v_2$};
	\node[circle, draw = black, left of = v1] (v3) {$v_3$};
	\node[circle, draw = black, left of = v12] (v4) {$v_4$};
	\node[circle, draw = black, right of = v2] (v5) {$v_5$};
	\node[circle, draw = black, right of = v22] (v6) {$v_6$};
	\draw (v1)--(v2);
	\draw (v12)--(v2);
	\draw (v1)--(v22);
	\draw (v12)--(v22);
	\draw (v1)--(v12);
	\draw (v2)--(v22);
	\draw (v3)--(v1);
	\draw (v3)--(v12);
	\draw (v4)--(v1);
	\draw (v4)--(v12);
	\draw (v5)--(v2);
	\draw (v5)--(v22);
	\draw (v6)--(v2);
	\draw (v6)--(v22);
    \end{tikzpicture}
    
    \caption{A three-edge path and its generalizations. Each node $i$ in any of the graphs above is labeled by the valuation function $v_i$ of the corresponding agent.}
    \label{fig:three-edge-path}.
\end{figure*}
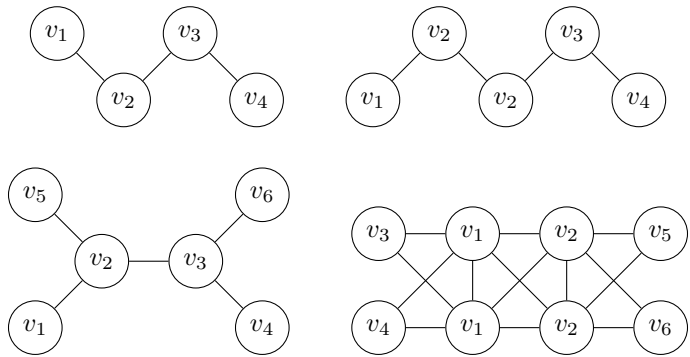

\begin{restatable}{proposition}{threeedgepath}\label{prop:three-edge-path}
When $G$ is the three-edge path $P_4$, a $G$-EFX allocation exists for agents with arbitrary general valuations.
\end{restatable}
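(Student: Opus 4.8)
The plan is to reduce the path $v_1 - v_2 - v_3 - v_4$ to two facts already in hand: (i) two agents with general valuations admit an EFX allocation --- one agent runs the identical-valuations algorithm of \citet{efxcanon} on two copies of its own valuation, obtaining a partition of $M$ into two bundles each EFX-acceptable to it, and the other agent takes its favourite; and (ii) $G$-EFX allocations exist on the star $K_{1,2}=P_3$, which is Proposition~\ref{prop:gefx:star} at $n=3$. Write the edges as $(1,2),(2,3),(3,4)$; agents $1,4$ are leaves and $2,3$ are internal.

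First I would make the internal agents' envy conditions essentially free. Have $v_2$ split $M$ into three bundles $A_1,A_2,A_3$ that are all EFX-acceptable to $v_2$ (run \citet{efxcanon} on three copies of $v_2$). Let $v_3$ take its favourite, say $A_3$, as the pool it will share with $v_4$; let $v_1$ take its favourite of $\{A_1,A_2\}$ whole, and give $v_2$ the remaining one whole. Then: $v_1$ does not envy $v_2$ (it chose its favourite of the two); $v_2$ does not strongly envy $v_1$ (its bundle is EFX-acceptable to $v_2$ against $v_1$'s); $v_2$ does not strongly envy $v_3$, because $v_3$'s bundle is a sub-bundle of $A_3$ and $v_2$'s retained bundle is EFX-acceptable to $v_2$ against all of $A_3$; and, having $v_4$ cut $A_3$ into two parts EFX-acceptable to $v_4$ and having $v_3$ pick its favourite of those, the edge $(3,4)$ is EFX. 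The single condition this does not yet secure is ``$v_3$ strongly envies $v_2$''.

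That last condition is the crux. After $v_3$ shares $A_3$ with $v_4$ it holds only a sub-bundle $X_3 \subsetneq A_3$, and although $v_3$ preferred $A_3$ to $v_2$'s bundle as whole bundles, a near-complete sub-bundle of $v_2$'s bundle can still outweigh $X_3$ for $v_3$; the all-identical-valuations instance already shows that the shares must be balanced along the path, which the construction above does not by itself force. I would resolve this with a case analysis on how much of $A_3$ agent $3$ can retain while keeping $v_4$ EFX-safe: in the benign case the split of $A_3$ can be chosen so that $v_3(X_3)$ already dominates every sub-bundle of $v_2$'s bundle and we are done; otherwise one moves a carefully chosen good across the $(2,3)$ edge towards $v_3$ and re-runs the affected leaf cut-and-choose, arguing termination by a potential such as the lexicographically sorted vector of the (finitely many, bounded) strong-envy amounts on the three edges, or the total value held by $\{v_2,v_3\}$; one also has the mirror-image construction centred on $v_3$ to invoke when that is the cleaner fix. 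An alternative route I would try instead: among all allocations that are EFX on the two outer edges $(1,2),(3,4)$, minimize the strong envy across $(2,3)$, and show that while that quantity is positive a single good transfer across $(2,3)$, absorbed by re-optimizing the outer edges, strictly decreases it --- the delicate point being to certify that the transfer cannot create a larger violation on an outer edge. Either way the construction is algorithmic, which is what the generalizations in Section~\ref{subsec:paths} rely on.
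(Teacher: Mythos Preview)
Your construction secures every edge of $P_4$ except the direction $3 \to 2$, and you correctly isolate that as the crux. But the remainder of the proposal is not a proof: you offer two sketches (a good-transfer procedure with a potential argument, and a minimization over all allocations EFX on the outer edges) and explicitly flag the ``delicate point'' that re-optimizing an outer edge after a transfer may increase the very quantity you are trying to decrease. Neither sketch comes with a potential that is actually shown to be bounded and strictly decreasing, and good-swapping arguments of this flavor are precisely the ones for which no general termination proof is known (this is why Section~\ref{sec:empirical} resorts to experiments). As written, the argument does not close.

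The missing idea is that you are building the initial partition using \emph{only} $v_2$, so agent~$3$'s EFX guarantee is never baked in and must be recovered after the fact. The paper instead invokes the two-valuation EFX result of \citet{twotypes}: compute an EFX allocation $Y=(Y_1,Y_2,Y_3,Y_4)$ for the profile $(v_2,v_2,v_3,v_3)$, let agent~$1$ pick its favourite of $\{Y_1,Y_2\}$ (agent~$2$ gets the other), and let agent~$4$ pick its favourite of $\{Y_3,Y_4\}$ (agent~$3$ gets the other). Because $Y$ is already EFX for \emph{both} $v_2$ and $v_3$ against every bundle, agents~$2$ and~$3$ do not strongly envy anyone, and the leaves do not envy their unique neighbours by choice --- no repair step is needed. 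Your three-bundle split via $v_2$ alone cannot give this symmetric guarantee; the right black box here is two-types EFX, not identical-valuations EFX.
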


\begin{proof}[Proof sketch]
Suppose the agents are numbered $1, 2, 3, 4$ in order. We use \citep{twotypes} to compute an allocation $X = (X_1, X_2, X_3, X_4)$ for valuation functions $(v_2, v_2, v_3, v_3)$, and have $1$ choose from $\{X_1, X_2\}$ before giving the other one to $2$ (and similarly for $3$ and $4$).
\end{proof}

Exactly as in Section \ref{subsec:stars}, this result can be generalized to a larger class of graphs. Again, we have \emph{core vertices} $N' \subseteq N$, corresponding to agents having one of two distinct valuation functions, say $v_k$ and $v_\ell$. The outer agents (in $N\setminus N'$) induce an independent set as before, and furthermore, they can have arbitrary neighborhoods among any one of the two types of agents in $N'$ (e.g., Figure \ref{fig:three-edge-path}). Formally, $N'$ can be partitioned as $N' = N'_k \sqcup N'_\ell$, with all agents in $N'_r$ with valuation $v_r$, for $r \in \{k, \ell\}$. The outer vertices in $N\setminus N'$ have arbitrary valuation functions, and for each $i \in N\setminus N'$, we have $\mathrm{Nbd}_G(i) \subseteq N'_k$ or $\mathrm{Nbd}_G(i) \subseteq N'_\ell$.

When $G$ is of the form above, we can find an EFX allocation.

\begin{restatable}{theorem}{thmresultthree}\label{thm:result-3}
Suppose agents have general valuations, and $G = (N, E)$ is of the form described above, i.e., consists of a core set of vertices $N' \subseteq N$ with two types of valuations, and all remaining agents in $N' \subseteq N$ with arbitrary valuations, but neighborhoods restricted to any of the two core groups of agents. Then a $G$-EFX allocation is guaranteed to exist.
\end{restatable}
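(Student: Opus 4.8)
The plan is to reuse the strategy behind Proposition~\ref{prop:three-edge-path}: reduce to an instance with only two distinct valuation functions so that the EFX existence result of \citet{twotypes} applies, and then redistribute the resulting bundles so that each outer agent picks before its core neighbours. Concretely, I would first build an auxiliary instance on the same $n$ agents and goods: every core agent keeps its valuation (agents in $N'_k$ keep $v_k$, agents in $N'_\ell$ keep $v_\ell$), and every outer agent $i \in N \setminus N'$ is reassigned the valuation $v_k$ if $\mathrm{Nbd}_G(i) \subseteq N'_k$ and the valuation $v_\ell$ if $\mathrm{Nbd}_G(i) \subseteq N'_\ell$ (an isolated outer agent may be assigned either). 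This auxiliary instance has at most two valuation types, so by \citet{twotypes} it admits an EFX allocation $Y = (Y_1, \dots, Y_n)$. Write $A_k$ (resp.\ $A_\ell$) for the set of agents assigned $v_k$ (resp.\ $v_\ell$) in the auxiliary instance, and $O_k = A_k \setminus N'_k$, $O_\ell = A_\ell \setminus N'_\ell$ for the outer agents on each side.

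Next I would redistribute the bundles side by side. On the $k$-side the available bundles are $\{Y_a : a \in A_k\}$; let the outer agents in $O_k$ pick from this pool one at a time, each taking a bundle it values most among those still available (under its \emph{true} valuation), and then hand the $|N'_k|$ leftover bundles to the agents of $N'_k$ in any order; do the symmetric thing on the $\ell$-side, and call the resulting allocation $X$. Two structural facts make this work: (a) every core agent of $N'_k$ (resp.\ $N'_\ell$) receives a bundle held in $Y$ by some $v_k$-agent (resp.\ $v_\ell$-agent), since all reshuffling stays within a single same-valuation pool; and (b) when an outer agent of $O_k$ made its pick, every bundle eventually assigned to a core agent of $N'_k$ was still available, because core agents pick only after all of $O_k$.

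It then remains to check $G$-EFX on every edge. Outer--outer edges do not exist, since $N \setminus N'$ is independent. For an edge $(i,j)$ with $i \in O_k$ and $j \in N'_k$ (its only possible core neighbours), fact~(b) gives $v_i(X_i) \ge v_i(X_j)$, so $i$ does not envy $j$ at all; and $j$, which still has valuation $v_k$ and by~(a) holds some $Y_a$ with $a \in A_k$ while $i$ holds some $Y_{a'}$ with $a' \in A_k$, does not strongly envy $i$ because in the EFX allocation $Y$ the $v_k$-agent $a$ does not strongly envy the $v_k$-agent $a'$. An edge between two core agents of the same type is handled exactly like the second half of this case. For an edge $(p,q)$ with $p \in N'_k$ and $q \in N'_\ell$: $p$ holds $Y_a$ with $a \in A_k$ and $q$ holds $Y_b$ with $b \in A_\ell$, and since agent $a$ had valuation $v_k$ in $Y$ and did not strongly envy the holder of $Y_b$, we get $v_k(Y_b - g) \le v_k(Y_a)$ for all $g \in Y_b$, so $p$ does not strongly envy $q$; routing symmetrically through agent $b$ (valuation $v_\ell$) handles $q$.

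The step needing the most care --- and the only place the argument is more than a one-line invocation of EFX --- is this "transfer'' used for the core--core and the core-side of the outer--core edges: a core agent is handed a bundle that a \emph{different} agent held in $Y$, so its own EFX guarantee in $Y$ cannot be applied directly. One must route each no-strong-envy inequality through the agent who originally held that bundle in $Y$, using that this holder shares the core agent's valuation, and keep track that reshuffling never leaves a same-valuation pool (which is exactly what facts~(a) and~(b) encode). One should also note the degenerate case where only one core type is present, which is subsumed by \citet{twotypes} (and \citet{efxcanon}) with $A_\ell = \varnothing$. Everything else is bookkeeping analogous to the $P_4$ argument.
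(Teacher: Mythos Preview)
Your proposal is correct and follows essentially the same approach as the paper: build an auxiliary two-type instance by giving each outer agent its neighbours' core valuation, invoke \citet{twotypes} to get an EFX allocation $Y$, split $Y$ into the $k$- and $\ell$-pools, and let outer agents pick from their pool before core agents. Your writeup is in fact more explicit than the paper's on the ``transfer'' step (routing the no-strong-envy inequality for a core agent through the original $Y$-holder of its bundle, who shares its valuation), which the paper leaves as a one-line remark.
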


\begin{proof}[Proof sketch]
As before we use \citep{twotypes} to compute an allocation on two types of goods, imagining outer vertices to share the valuation profiles of their neighbors. Each outer agent then chooses a bundle (from the relevant pool) before their core neighbors do.
\end{proof}

We conclude this subsection by observing that Theorem \ref{thm:result-3} immediately shows the existence of $G$-EFX allocations for several classes of small graphs (Figure \ref{fig:three-edge-path}). Notably among these are the graph consisting of two arbitrary stars connected at their central vertices, and the {\em four-edge path} $P_5$ where any two of the degree-$2$ vertices have the same valuation profile.

\subsection{Chores}\label{subsec:chores}

In this subsection, we will investigate the problem in the presence of chores. To begin, assume that there are no goods. It is known that an EFX allocation exists for chores when agents have consistent additive valuations \citep{li2022efxchores}. Moreover, this allocation can be computed in polynomial time.
Using this result, we show that analogous results to Theorems \ref{thm:result-1} and \ref{thm:result-2} hold in this case. 
The proofs are very similar to that of Theorems \ref{thm:result-1} and \ref{thm:result-2} and are therefore omitted. 


\begin{restatable}{theorem}{thmresultonechores}\label{thm:result-1chores}
Suppose we only have chores, under additive valuations. Let $G = (N, E)$ consist of a core set of agents $N' \subseteq N$ with identical valuations, with $N\setminus N'$ inducing an independent set in $G$. Then a $G$-EFX allocation is guaranteed to exist and it can be computed in polynomial time.
\end{restatable}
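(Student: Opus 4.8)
The plan is to adapt the proof of Theorem \ref{thm:result-1} to the chores setting, replacing the existence result of \citep{efxcanon} for goods with identical valuations by the existence result of \citep{li2022efxchores} for chores with consistent additive valuations (identical valuations being a special case of consistent ones). First I would take an arbitrary core agent and let $v$ denote its valuation function. I create $n$ ``virtual'' copies of this agent, all with valuation $v$, and invoke \citep{li2022efxchores} to obtain an EFX allocation of the chores $M$ among these $n$ copies, yielding bundles $Y_1, \ldots, Y_n$ such that for every pair of copies $p, q$ there is no strong envy, i.e. $v(Y_p) \geq v(Y_q) - c$ for every chore $c \in Y_p$ (using the chores definition of strong envy from the Chores subsection).

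Next I would assign bundles to the actual agents in two phases. In the first phase, the outer agents $N \setminus N'$ pick bundles one at a time, each choosing a remaining bundle that is best for it (least negative value), in arbitrary order. After all $|N \setminus N'|$ outer agents have picked, the core agents $N'$ take the remaining bundles in arbitrary order. The verification then splits by edge type. For an edge between two core agents $i, j \in N'$: both have valuation $v$, both hold bundles from the EFX allocation $\{Y_p\}$, and the EFX guarantee among the virtual copies transfers directly, so there is no strong envy in either direction. For an edge $(i, j)$ with $i \in N \setminus N'$ outer and $j$ its neighbor (necessarily in $N'$ since outer agents form an independent set): agent $i$ picked its bundle before any core agent, so in particular $i$ picked (weakly) before $j$ got its bundle; hence $v_i(X_i) \geq v_i(X_j)$, which rules out $i$ envying $j$ at all, let alone strongly. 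For the reverse direction — whether $j$ strongly envies $i$ — agent $j$ has valuation $v$ and we need to compare against the bundle $X_i$, which is one of the $Y_p$'s; again the EFX property among the virtual copies (each with valuation $v$) gives $v(X_j) \geq v(X_i) - c$ for every chore $c \in X_j$, so no strong envy. Since outer agents induce an independent set, these are the only edge types, so the allocation is $G$-EFX.

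For the polynomial-time claim, I would note that \citep{li2022efxchores} computes its EFX allocation for consistent additive valuations in polynomial time, that simulating $n$ virtual copies only blows up the agent count by a factor of $n$, and that the two-phase assignment phase is a simple greedy pass costing $O(mn)$ per pick and $O(mn^2)$ overall; hence the whole procedure runs in polynomial time, and in fact one can match the $O(mn^3)$ bound of Theorem \ref{thm:time-complexity-1} by the same bookkeeping.

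The main obstacle I anticipate is being careful about the direction of envy and the exact form of the chores strong-envy definition: for chores, an agent strongly envies by imagining \emph{removing a chore from its own bundle}, not from the other agent's, so the argument that ``picking earlier eliminates envy'' must be checked to eliminate \emph{strong} envy and not merely weak envy — but since picking a best (least negative) bundle eliminates envy entirely, it a fortiori eliminates strong envy, so this works. The only genuinely load-bearing external fact is that \citep{li2022efxchores}'s guarantee is a \emph{strong}-envy (EFX) guarantee among identically-valued agents, which it is since identical valuations are consistent; everything else is the same bundle-reshuffling bookkeeping as in Theorem \ref{thm:result-1}, which is why the authors omit the details.
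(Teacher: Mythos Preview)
Your proposal is correct and follows exactly the approach the paper intends: it is the chores analogue of the proof of Theorem~\ref{thm:result-1}, with \citet{li2022efxchores} substituted for \citet{efxcanon} to obtain the initial EFX partition into $n$ bundles under the common core valuation, followed by letting outer agents pick first and core agents last. The paper omits the proof precisely because it is this straightforward adaptation, and your verification of both edge directions (outer agents have no envy toward core neighbors since they picked first; core agents have no strong envy toward anyone since every bundle is one of the $Y_p$'s and $Y$ is EFX under $v$) together with the polynomial-time observation matches the intended argument.
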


\begin{restatable}{theorem}{thmresulttwochores}\label{thm:result-2chores}
Suppose we only have chores, under additive valuations. Let $G = (N, E)$ consist of a core set of agents $N' \subseteq N$ with consistent valuations. Let $N' = N'_1 \sqcup \ldots N'_K$, where all agents in $N'_k$ have the same valuation function $v_k$, and every vertex $i \in N \setminus N'$ has its neighborhood $\mathrm{Nbd}_G(i)\subseteq N'_k$ for some $k$. Then a $G$-EFX allocation is guaranteed to exist and it can be computed in polynomial time.
\end{restatable}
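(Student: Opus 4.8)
The plan is to follow the proof of Theorem~\ref{thm:result-2} almost verbatim, replacing the goods-EFX existence result of \citet{efxcanon} with the chores-EFX existence result of \citet{li2022efxchores}. Write $N' = N'_1 \sqcup \dots \sqcup N'_K$ as in the statement, and let $v_k$ denote the common valuation of the agents in $N'_k$; since the core agents have pairwise consistent valuations, so do $v_1, \dots, v_K$. Partition the outer agents into groups $O_1, \dots, O_K$, placing each $i \in N \setminus N'$ with $\mathrm{Nbd}_G(i) \subseteq N'_k$ into $O_k$ (choosing arbitrarily if several $k$ work, e.g.\ for an outer vertex of degree $0$). I then form an auxiliary chore-division instance on the same $n$ agents and $m$ chores in which every outer agent of $O_k$ is given the valuation $v_k$; in this instance all $n$ agents have pairwise consistent additive valuations, so \citet{li2022efxchores} yields an EFX allocation $Y = (Y_1, \dots, Y_n)$, computable in polynomial time.

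Next I would reallocate the bundles of $Y$ group by group. For each $k$, form the pool $\mathcal{P}_k = \{Y_j : j \in N'_k \cup O_k\}$; let the outer agents of $O_k$ pick, in an arbitrary fixed order, a still-available bundle of $\mathcal{P}_k$ of maximum value under their \emph{true} valuation, and finally distribute the $|N'_k|$ leftover bundles of $\mathcal{P}_k$ among the agents of $N'_k$ via any bijection. Since $|\mathcal{P}_k| = |N'_k| + |O_k|$, this produces a genuine allocation $X$; computing it on top of \citet{li2022efxchores} costs only polynomially many additive-valuation queries, as in Theorem~\ref{thm:time-complexity-1}.

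The remaining work is to verify that $X$ is $G$-EFX, and since $N \setminus N'$ is independent, every edge of $G$ is core--core or core--outer. On a core--outer edge $(c, i)$ with $c \in N'_k$ and $i \in O_k$: agent $i$ does not even envy $c$, because every bundle that is handed to a core agent was still available when $i$ chose its favorite from $\mathcal{P}_k$; and $c$ does not strongly envy $i$ because $v_c = v_k$ coincides with $i$'s auxiliary valuation, and for agents with a common valuation $v$ the chores-EFX condition --- for every ordered pair of assigned bundles $(B, B')$ and every chore $o \in B$, $v(B - o) \ge v(B')$ --- quantifies over the pair and hence depends only on the \emph{set} of bundles $\mathcal{P}_k$, which is exactly the set of bundles the EFX allocation $Y$ assigned to the $v_k$-agents $N'_k \cup O_k$. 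On a core--core edge $(c, d)$ with $c \in N'_k$ and $d \in N'_\ell$, the crucial point is that, by construction of the auxiliary instance, \emph{every} bundle of $\mathcal{P}_k$ was held in $Y$ by an agent with valuation $v_k$; so the pool-$k$ bundle that $c$ receives and the pool-$\ell$ bundle that $d$ receives were held in $Y$ by a $v_k$-agent and a $v_\ell$-agent between whom EFX of $Y$ forbids strong envy, and since this comparison only involves the valuation $v_k = v_c$ and those two bundles, it transfers verbatim to $(c, d)$ (when $k = \ell$ this is again the common-valuation set property).

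I expect the one genuinely delicate point --- and the step I would write out in full --- to be the core--core case: in contrast with the identical-valuation setting, chores-EFX for merely \emph{consistent} valuations is \emph{not} a property of the unordered multiset of bundles, so the argument really relies on the fact that the auxiliary instance was engineered so that each pool-$k$ bundle is witnessed in $Y$ as being held by a $v_k$-agent. The other things to double-check are routine: that assigning an outer agent the valuation $v_k$ keeps the auxiliary instance inside the consistent-valuations regime (immediate, since $v_k$ is itself one of the consistent core valuations), and that the asymmetry in the definition of strong envy for chores --- removing a chore from the \emph{envier}'s bundle rather than the envied agent's --- does not affect the ``outer agent picks its favorite'' argument, which it does not, since that argument uses only the absence of ordinary envy.
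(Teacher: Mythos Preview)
Your proposal is correct and is exactly the approach the paper intends: the paper omits the proof of Theorem~\ref{thm:result-2chores} entirely, stating only that it is ``very similar to that of Theorems~\ref{thm:result-1} and~\ref{thm:result-2}'', i.e., build the auxiliary consistent-valuation instance, invoke \citet{li2022efxchores} in place of \citet{efxcanon}, and let outer agents pick first from their group's pool. Your write-up is in fact more careful than the paper's own treatment of the goods case, since you explicitly address the chores-specific asymmetry in strong envy and spell out why the core--core argument still goes through when valuations are merely consistent rather than identical.
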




We remark that an analogous result to Theorem \ref{thm:result-3} is hard to obtain for chores, as EFX allocations are not known to exist for chores under two types of valuations, in the style of \citet{twotypes}.

To conclude this section, consider the setting with \emph{both} goods and chores, where it is known that an EFX allocation is not guaranteed to exist for general valuations \citep{mixedgeneralcounter}. Recently, it was shown that they do not exist even when the valuations are \emph{lexicographic} \citep{goodsandchores}.

\begin{definition}
A valuation function $v_i$ is \emph{lexicographic} if $M$ has a priority order $o_1 \succ_i \ldots \succ_i o_m$, such that if $o_1 \in M^+$ (resp. $M^-$), $i$ prefers any bundle with (resp. without) $o_1$ over any other bundle; subject to that, agent $i$ prefers any bundle with (resp. without) $o_2$ over any other bundle, and so on.
\end{definition}

The \emph{diameter} of a graph $G$ is the maximum length of a shortest path in $G$, $\max_{i, j \in V}\mathrm{dist}_G(i, j)$. We show that if $G$ has sufficiently long diameter, it admits an EFX allocation under lexicographic valuations on goods and chores. 

\begin{restatable}{proposition}{lexicographic}\label{prop:lexicographic}
Any graph $G$ with diameter $d \geq 4$ admits a $G$-EFX allocation under lexicographic valuations on goods and chores, and this allocation can be computed in polynomial time.
\end{restatable}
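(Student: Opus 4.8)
The plan is to exploit the long diameter of $G$ to find two vertices $u, w$ with $\mathrm{dist}_G(u, w) \geq 4$, which partitions $N$ into ``shells'' by distance from $u$: the vertices at distance $0$, $1$, $2$, and $\geq 3$ from $u$. The key structural fact is that there is \emph{no} edge between a vertex at distance $\leq 1$ from $u$ and a vertex at distance $\geq 3$ from $u$, since such an edge would create a $u$--$w$ walk of length $\leq 3$ contradicting $\mathrm{dist}_G(u,w) \geq 4$ (more precisely, we only need that the ``close to $u$'' set and the ``far from $u$'' set are non-adjacent, which distance-$4$ separation guarantees). So I would split the agents into a group $A$ (close to $u$) and a group $B$ (close to $w$), with all cross-edges confined to a ``middle'' region, and the goal is to allocate so that every middle agent is EFX-happy with respect to both sides while agents within $A$ and within $B$ are EFX-happy among themselves.

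The next step is to use the structure of lexicographic valuations. The crucial observation is that with lexicographic valuations, an agent who receives their single top-priority item (if it is a good) or avoids it entirely (if it is a chore) can only be envious of an agent holding that same item; and more generally, a ``greedy'' or round-robin-style allocation following priority orders yields allocations where strong envy is extremely limited. I would aim to give a handful of ``designated'' agents near $u$ and near $w$ their most-preferred items in a way that makes them non-envious along their edges, then allocate the remaining items to the remaining agents by an EF1/round-robin procedure restricted appropriately; because the two halves share no edges, conflicts cannot propagate across the graph. Concretely: pick $u$ and $w$; give $u$ a bundle that makes $u$ EFX-safe against its neighbors (e.g., $u$ takes its top good, or all goods and no chores, whatever its lexicographic order dictates), do the symmetric thing at $w$, and then run the known polynomial-time EFX or EF1 procedure for lexicographic valuations (e.g., the sequential priority-based allocation of \citet{goodsandchores} or a draft mechanism) on the remaining agents and items, using the diameter gap to ensure that any residual strong envy is localized to a region where it has already been killed.

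The main obstacle I anticipate is the bookkeeping for \emph{mixed} goods and chores simultaneously: a lexicographic agent's behavior depends on whether its top item is a good or a chore, and ensuring EFX along an edge $(i,j)$ requires controlling strong envy in both directions, where $i$'s strong envy can come from a good in $X_j$ \emph{or} a chore in $X_i$. The delicate point is arranging that the ``middle'' agents — those whose neighborhoods straddle regions — each get a bundle that is EFX-safe against \emph{all} their neighbors simultaneously, and this is exactly where a diameter of $3$ would fail (the middle collapses and forces a global constraint like the known non-existence instances). So the heart of the argument will be a careful case analysis: designate enough boundary vertices on each side, feed each its priority-dictated ``safe'' bundle, and verify that after peeling these off, the leftover instance on the leftover (mutually edge-separated) pieces can be handled by the existing lexicographic EFX machinery. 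The polynomial-time claim then follows since computing distances, identifying $u, w$, the shell decomposition, and the underlying lexicographic allocation subroutine are all polynomial.
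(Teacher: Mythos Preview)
Your geometric setup is right: pick $u,w$ at distance $\geq 4$, and use the fact that $\{u\}\cup\mathrm{Nbd}_G(u)$ and $\{w\}\cup\mathrm{Nbd}_G(w)$ are disjoint and non-adjacent. But the allocation you sketch does not go through, for two concrete reasons.

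First, your plan hinges on ``the known polynomial-time EFX or EF1 procedure for lexicographic valuations'' to clean up the residual agents and items. There is no such EFX procedure for mixed goods and chores: the whole point of this proposition is that \citet{goodsandchores} proved EFX can \emph{fail} to exist for lexicographic mixed instances, so you cannot call any black-box subroutine. Falling back to EF1 does not give you $G$-EFX on the remaining edges, and you offer no mechanism by which the diameter gap would upgrade EF1 to EFX there. Second, the concrete bundles you suggest for $u$ (``$u$ takes its top good, or all goods and no chores'') do not by themselves make the edges at $u$ EFX-safe: if $u$ holds two or more goods and a neighbor $p$ holds nothing, then $p$ strongly envies $u$. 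So even the base case of your peel-and-recurse is broken before you reach the residual instance.

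The paper's argument avoids both issues with a single idea your parenthetical almost touches but then drops: separate the \emph{items} by type, not just the agents by location, and do not recurse at all. Let each neighbor of $u$ pick (in any order) its single highest-priority remaining good, and give all leftover goods to $u$; symmetrically, hand each neighbor of $w$ one of $w$'s highest-priority chores, and give all leftover chores to $w$; every other agent gets $\varnothing$. Diameter $\geq 4$ guarantees no edge joins the ``goods side'' to the ``chores side.'' The EFX check is then four lines: any envy toward a neighbor of $u$ is not strong (they hold one good); a neighbor $p$ of $u$ does not envy $u$ at all, since $p$ holds its top-priority good among everything $u$ holds (lexicographic); any envy \emph{from} a neighbor of $w$ is not strong (they hold one chore); and $w$ does not envy its neighbors, who hold $w$'s worst chores. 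The ``middle'' agents you were worried about are trivial---they get nothing and are adjacent only to agents with at most one item. There is no residual instance and no case analysis on mixed bundles.
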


\begin{proof}[Proof sketch]
For agents $u$ and $v$ at distance at least $4$, all neighbors of $u$ receive one good, all neighbors of $v$ receive one chore, and the remaining goods and chores are given to $u$ and $v$ respectively. All other agents receive empty allocations. 
\end{proof}

The class of graphs with diameter at least $4$ is large, and it includes most sparse graphs (including most trees) and graphs with large treewidth. The \emph{six degrees of separation} thought experiment and its theoretical generalizations suggest that all real-life social networks fall in this category of graphs \citep{sixdegrees}. Proposition \ref{prop:lexicographic} demonstrates that restricting EFX to graphs can sometimes enable fairness properties that are not possible in general. In particular, the result of \citet{goodsandchores} closed off EFX under additive valuations for goods and chores, but it does not hold on non-complete graphs. Our result indicates that $G$-EFX may be possible for goods and chores on many natural graphs under arbitrary additive valuations.


\section{Empirical Results for EFX on Graphs}\label{sec:empirical}

In this section, we discuss a simple algorithm that handles allocations on a more robust class of graphs than in Section \ref{sec:theory}. We do not prove that this algorithm always terminates successfully, though it is obvious by design that {\em if} it terminates, it does so with a $G$-EFX allocation. We present empirical results using our algorithm in many (representative) real instances, and show that the algorithm terminates with a $G$-EFX allocation in all of them. The typical proof technique in these contexts is to use a \emph{potential function}, i.e., a loop variant bounded below that monotonically decreases during each round, indicating progress being made~\citep{benabbou2020finding, threeagents, twotypes}. We empirically investigate different natural candidates for such a function.

\subsection{The ``Sweeping'' Algorithm}\label{subsec:algorithm}

For simplicity, all our empirical results will use the path graph $P_n$ for the underlying graph $G$. We will assume that the agents are $1, \ldots, n$ in that order along the path, and that agents have additive valuations.
The pseudocode of the algorithm can be found in Algorithm~\ref{alg:sweeping_algo}. For each edge, the algorithm reallocates the goods along that edge using Algorithm~\ref{alg:local_efx_algo}, which combines the ideas of Algorithms 4.2 and 6.1 from~\citet{efxcanon}.

The algorithm proceeds in rounds after assigning all goods in $M$ initially to agent $1$. Each round of the algorithm consists of ``sweeping'' the edges of $P_n$ step by step, from left to right in order (the {\em forward sweep}), and then back again (the {\em reverse sweep}). 

\begin{algorithm}[h]
    \caption{\textsc{Local EFX} (\citep{efxcanon})}
    \begin{algorithmic}
        \Require {Bundles $X_i, X_{i+1}$, additive valuation functions $v_i, v_{i+1}$}
        \State $M' \gets \text{Sorted}(X_i \cup X_{i+1})$ \Comment{In decreasing order of $v_i$}
        \State $(Y_1, Y_2) \gets (\varnothing, \varnothing)$
        \For{$g \in M'$}
            \State $Y' \gets \argmin_{j \in \{1, 2\}} v_i(Y_j)$
            \State $Y' \gets Y' + g$
        \EndFor
        
        \State $X_{i+1} \gets \argmax_{j \in \{1,2\}} v_{i+1}(Y_j)$
        \State $X_i \gets M' \setminus X_{i+1}$
        
    \State \Return $X_i, X_{i+1}$
    \end{algorithmic}
    \label{alg:local_efx_algo}
  \end{algorithm}
  
\begin{algorithm}[t]
    \caption{\textsc{Sweeping Algorithm}}
    \begin{algorithmic}
        \Require {$n$ agents $N$, $m$ goods $M$, additive valuation functions $v_i: 2^M \to \mathbb{R}$ for agents $i \in N$, path graph $P_n$}
        \State Initialize allocation $X = (X_1, \dots X_n)$ with $X_1 = M$ and $X_i = \varnothing$ for all $i \neq 1$
        
        \While{There is an edge $(i, i+1)$ in $P_n$ such that $i$ strongly envies $i+1$ or $i+1$ strongly envies $i$}
            \State $X^{\text{init}} \gets X$
            \For{$i \in \{1, \dots, n-1\}$}
                \State $(X_i, X_{i+1}) \gets {\textsc{LocalEFX}}(X_i, X_{i+1}, v_i, v_{i+1})$
            \EndFor
            \For{$i \in \{n-2, \dots, 1\}$}
                \State $(X_i, X_{i+1}) \gets {\textsc{LocalEFX}}(X_i, X_{i+1}, v_i, v_{i+1})$
            \EndFor
            \If{$X^{\text{init}} = X$ and some edge $(i, i+1)$ has strong envy}
                \State \Return Failure
            \EndIf
        \EndWhile
    \State \Return $X$
    \end{algorithmic}
    \label{alg:sweeping_algo}
  \end{algorithm}

This implies that on a particular step of any round, there is a well-defined edge of $P_n$ we are looking at, say the edge $(i, i + 1)$, with current bundles $(X_i, X_{i+1})$. This step of the algorithm consists of ``fixing'' the allocation on the edge $(i, i + 1)$ by re-allocating the goods in $X_i\cup X_{i+1}$ between the agents $i$ and $i + 1$ so that neither of them strongly envies the other. Recall that this is straightforward to do, as EFX allocations are known to exist for general valuations on two agents, due to \citet{efxcanon}, who use a cut-and-choose protocol, with one agent being the ``cutter'' and the other the ``chooser''\footnote{ Algorithm~\ref{alg:local_efx_algo}, while efficient, only works for additive valuations. \citet{efxcanon} give a potentially inefficient algorithm for 2 agents with general valuations, so the sweeping algorithm could still theoretically be applied with general valuations.}. Fixing a particular edge $(i, i + 1)$ might negate the EFX criterion on a previously fixed edge, such as $(i - 1, i)$; we then must fix that edge in a subsequent step, sweep or round. If an edge under consideration already meets the EFX criterion, then we do nothing and move on to the next step of the sweep. At the end of a round (i.e. we are on the edge $(1, 2)$ at the end of a reverse sweep), we check if our current allocation is $G$-EFX. If it is, we are done; otherwise, we start the next round with the current allocation.

We conclude this subsection by remarking that this algorithm can be defined for other graphs, as well as with different initialization conditions. For instance, if $G$ is a tree, we could use an in-order traversal of its edges as our ``sweeping'' order. We could also start with a random allocation of the goods among the vertices of $G$ instead of assigning all of them to one particular vertex. We relegate the analysis of these variants to future work, and for the rest of this section, only discuss path graphs with the initialization condition as stated above. In addition, because Algorithm 4.2 of \citet{efxcanon} applies to two agents with general valuations, this algorithm could be applied to instances with general valuations. For simplicity, our experiments use additive valuation functions.

\begin{table*}
    \centering
    \begin{tabular}{|c|c|c|}
    \hline
        Potential Function & Closed-Form Expression & Expectation \\
        \hline
         Total Envy $(\phi_1)$ & $\sum_{i = 1}^{n-1} \envy_v^X(i, i+1) + \envy_v^X(i+1, i)$ &  Decreasing \\ [0.1cm] 
         Total Strong Envy $(\phi_2)$ & $\sum_{i = 1}^{n-1} \efxenvy_v^X(i, i+1) + \efxenvy_v^X(i+1, i)$ & Decreasing \\ [0.1cm]
         Minimum Valuation $(\phi_3)$ &
         $\min_i v_i(X_i)$ & Increasing \\
         \hline
    \end{tabular}
    \caption{Summary of the potential functions. $\envy_v^X(i, j)$ denotes the amount by which $i$ envies $j$ in $X$. $\efxenvy_v^X(i, j)$ denotes the amount by which $i$ strongly envies $j$ in $X$ i.e. the total envy $i$ has for $j$ after dropping $j$'s worst good from $i$'s perspective.  
    }
    \label{tab:pot-functions}
\end{table*}

\subsection{Performance and Potential Functions}\label{subsec:performance}

We use the data from Spliddit \citep{spliddit}\footnote{See \href{http://www.spliddit.org/}{http://www.spliddit.org/}} for our experiments. Spliddit users set up allocation problems with any number of goods and agents. Each agent is given 1,000 points to allocate across all of the goods, in integer amounts. We assume all goods are indivisible. There are 3,392 problem instances with three or more agents. For each problem instance on $n$ agents, we number them $1$ through $n$ and create $P_n$. Code for these experiments is available on GitHub\footnote{See \href{https://github.com/justinpayan/graph_efx}{https://github.com/justinpayan/graph\_efx}}.

We found that Algorithm \ref{alg:sweeping_algo} successfully terminates with a $G$-EFX allocation in every single instance. In addition, it seems to converge extremely fast, completing all instances in a matter of seconds on an Intel Core i7 8th generation processor. The vast majority (3,087) of instances finished in only a single round, 296 required two rounds, 8 required three, and only a single instance required four rounds.

\begin{figure*}
    \centering
    \begin{minipage}[b]{\textwidth}
    \begin{subfigure}[b]{0.32\linewidth}
        \includegraphics[width=\linewidth]{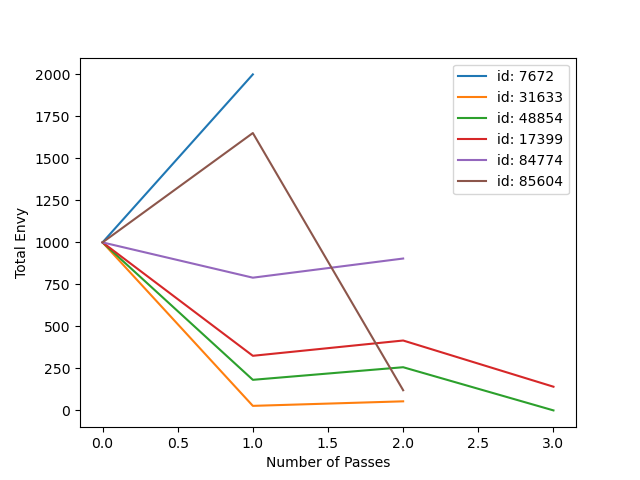}
        \caption{Total Envy}
        \label{fig:envy_over_passes}
    \end{subfigure} 
    \hfill
    \begin{subfigure}[b]{0.32\textwidth}
        \includegraphics[width=\linewidth]{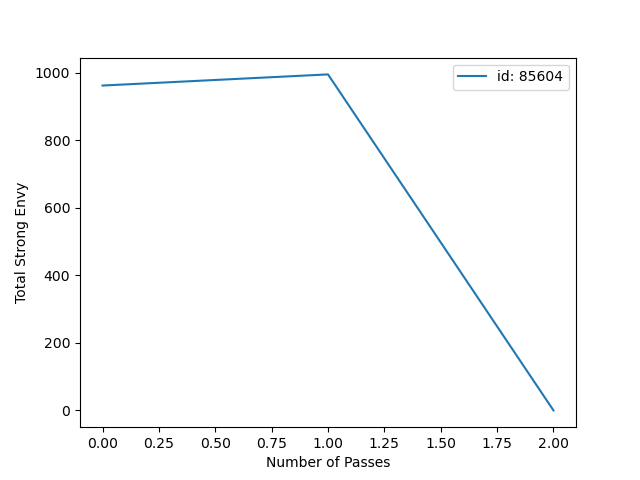}
        \caption{Total Strong Envy}
        \label{fig:strong_envy_over_passes}
    \end{subfigure} 
    \hfill
    \begin{subfigure}[b]{0.32\textwidth}
        \includegraphics[width=\linewidth]{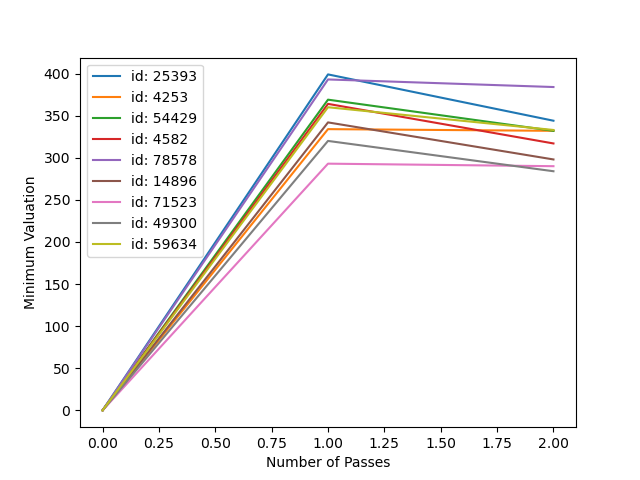}
        \caption{Minimum Valuation}
        \label{fig:min_over_passes}
    \end{subfigure} 
    \end{minipage}
    \caption{Plots of three candidate potential functions on selected instances. Neither total envy nor total strong envy are monotonically non-decreasing, and the minimum valuation is not monotonically non-increasing.}
    \label{fig:potential_fns}
\end{figure*}

In order to analyze this algorithm, it suffices to show that it terminates, as the termination condition corresponds to an EFX allocation. The standard technique for showing such a round-based algorithm terminates is to use a {\em potential function} $\phi$, a variable quantity that changes monotonically across rounds, indicating that the algorithm is making progress. Typically, the potential function is bounded by a theoretical maximum or minimum, which implies that it cannot continue to increase or decrease indefinitely. Several papers related to EFX allocations use potential functions \cite{benabbou2020finding, threeagents, twotypes}. In our case, we restrict our attention to potential function values in between \emph{rounds} rather than \emph{steps}, without loss of generality.

We investigate three reasonable candidates for potential functions, summarized in Table~\ref{tab:pot-functions}: $\phi_1$, the sum of the envy across all adjacent agents (in both directions); $\phi_2$, the sum of the \emph{strong} envy over all adjacent agents; and $\phi_3$, the minimum value realized by any agent on its assigned bundle. We expect $\phi_1$ or $\phi_2$ to be monotonically non-increasing over a run of our algorithm, as we repeatedly remove strong envy between pairs of agents. We also expect $\phi_3$ to be monotonically non-decreasing, since at each step envious agents with relatively worse bundles receive better goods. Figure~\ref{fig:potential_fns} shows the trajectories of these potential functions. Plots \ref{fig:strong_envy_over_passes} and \ref{fig:min_over_passes} show all examples with violations of monotonicity for $\phi_2$ and $\phi_3$, but there are 273 instances which violate monotonicity of $\phi_1$. While the instances violating monotonicity of $\phi_3$ violate it by a small amount, the violations for $\phi_1$ have no discernible pattern.

\begin{table}
    \centering
    
        \begin{tabular}{c | c c c c c c }
             Agent & $g_1$ & $g_2$ & $g_3$ & $g_4$ & $g_5$ & $g_6$\\
             \hline
             $1$ & 120 & 200 & 80 & 120 & 400 & 80 \\ 
             $2$ & 39 & 39 & 38 & 77 & 769 & 38\\ 
             $3$ & 994 & 1 & 1 & 1 & 2 & 1\\
        \end{tabular}
   \caption{Instance where $\phi_2$ is not non-increasing.}
    \label{tab:strong_envy_not_monotone}
\end{table}

Because there is only one instance violating monotonicity of strong envy, we present it here (Table~\ref{tab:strong_envy_not_monotone}). We remark that when looking at the edge $(i, i+1)$, our implementation always designates $i$ as the cutter and $i + 1$ as the chooser. However, if we had reversed these roles for $2$ and $3$ in the first round, we would immediately obtain a $G$-EFX allocation. This means that for all Spliddit instances, there is some sequence of cutters and choosers in our protocol that causes the overall strong envy to be monotonically non-increasing.




\section{Previous Work}\label{sec:previous}


The existence of EF1 allocations in general using an efficient algorithm (cycle elimination in the envy graph) was shown by \citet{lipton2004ef1}. EFL allocations were introduced by \citet{barman2018}, who also showed their existence under additive valuations.

Hidden envy was studied by \citet{hosseini2020}, who showed most known results in the realm, together with lower bounds.

EFX allocations are known to exist in certain special cases.
\citet{efxcanon} present a number of results for EFX allocations, most notably EFX for $n$ agents with identical general valuations, $n$ agents with consistent additive valuations, and for two agents with arbitrary general valuations. It is also known that EFX allocations exist for three agents with additive valuations~\citep{threeagents} and when all agents have one of two (general) valuation functions \citep{twotypes}. EFX allocations also exist for submodular valuations with binary marginal gains \citep{babaioff2021fair}. When agents can only have two possible values for all goods, the solution maximizing the product of agents' valuations (the Nash welfare) is EFX \citep{amanatidis2021maximum}.


There have been a few variants of this problem which are centered on approximately EFX allocations. \citet{efxcanon} define a $c$-EFX allocation as one in which for any pair of agents, either can drop any good from their allocated bundle, and be within a multiplicative factor of $c$ from the other agent's valuation for their bundle. They demonstrate an algorithm to achieve a $\frac{1}{2}$-EFX allocation. \citet{amanatidis2020multiple} subsequently show a single algorithm satisfying $(\phi-1)$-EFX ($\phi \approx 1.618$ is the golden ratio) along with three other fairness properties out of the scope of the current work.
\citet{amanatidis2021maximum} also present an alternative approximation rule for EFX allocations, and show that the maximum Nash welfare solution is a $\frac{1}{2}$-approximation to EFX under this new definition.

Another line of work has sought EFX allocations on a subset of the goods, leaving the rest unallocated. \citet{caragiannis2019envy} aim for high Nash welfare and EFX, showing an allocation on a subset of goods that is EFX and achieves at least half of the maximum possible Nash welfare. \citet{chaudhury2021little} give an algorithm such that no more than $n$ goods go unallocated. \citet{chaudhury2021improving} create an EFX allocation and bound the number of goods that remain unallocated using a graph theoretic function on the natural numbers called the {\em rainbow cycle number}. \citet{berger2021almost} improve these results for four agents by constructing an EFX allocation that leaves at most one good unallocated in that setting.



A few other works have modeled fairness concepts using graphs. One line of work specifies a graph structure over goods where bundles correspond to connected subgraphs~\citep{bouveret2017fair, bilo2022almost, igarashi2019pareto, bei2021price, bei2021dividing, bei2020cake, tucker2021thou}. Other works have considered graph structures over agents. \citet{beynier2019local} investigate envy-free housing allocation over a graph, where agents receive one good each and must not envy their neighbors. \citet{bredereck2022envy} likewise consider envy-free allocations over a graph, with the goal of determining in polynomial or parametrized polynomial time if a envy-free allocation exists on the graph. \citet{aziz2018knowledge} assume that agents can only view the allocations of adjacent agents in a graph. They seek allocations that are \emph{epistemically} envy-free, where no vertex envies its neighbors, and furthermore, for any vertex $x$, there is an allocation of the remaining goods (other than the ones allocated to $x$ and its neighbors) to the other agents so that $x$ does not envy any other agent. \citet{epistemicEFX} have considered an \emph{epistemic EFX} variant. A few papers study scenarios which explicitly limit the operations of distributed algorithms to a graph structure, and aim to satisfy envy-freeness or maximize the minimum agent value \cite{beynier2018fairness, lange2019optimizing, eiben2020parameterized, kaczmarczyk2019fair, varricchione2021complexity}. 
Although these works have similar motivation to ours, none of them to our knowledge address the question of finding EFX or HEF allocations on a graph.

\section{Discussion and Conclusion}\label{sec:conclusion}
We investigate two relaxations of envy-freeness, hidden envy-freeness and envy
freeness up to any item, on graph structures. The relaxation to graphs is natural --- many real life agents only care about the agents with whom they interact. In many cases, it enables us to obtain results that have not been possible on complete graphs. Obtaining positive results on natural classes of graphs may also help prove the existence of EFX and HEF allocations more broadly. Our empirical results show that $G$-EFX allocations are likely to exist for several more general classes of graphs like paths $P_n$. It would be interesting to conclude whether the existence of a $G$-EFX allocation implies the existence of a $G'$-EFX allocation, where $G'$ is a subgraph of $G$, on the same set of goods. We would also like to know if any of the suggested potential functions are monotonic for arbitrary sequences of cutters and choosers, or if there are other potential functions that are more suitable. Several other notions of fairness like local proportionality and local max-min share are definable naturally on graphs, and these offer scope for future research.

\section*{Acknowledgments}
We thank Raghav Addanki for participating in early discussions of this work. We also thank Rohit Vaish for pointing us to several interesting concepts in fair allocation, and suggesting approaches towards some of the proof techniques we used. We thank Yair Zick and Hadi Hosseini for helpful comments and suggestions. Finally, we thank Andrew McGregor in pointing us to the real-world characterization of graphs with large diameter. RS acknowledges support from NSF grant no. CCF-1908849.

\bibliographystyle{plainnat} 
\bibliography{abb, references}

\appendix
\newpage
\onecolumn
\section{Proof from Section \ref{sec:intro}}\label{apdx:intro}


\begin{proposition}
There are allocations that are EFX but not uHEF-$(n - 1)$, as well as allocations that are uHEF-$1$ but not EFX.
\end{proposition}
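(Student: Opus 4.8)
The plan is to prove the two claims by exhibiting explicit instances; no real machinery is needed, and the whole argument reduces to a choice of additive valuations plus the verification of two easy inequalities in each case.

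\textbf{An EFX allocation that is not uHEF-$(n-1)$.} For any $n \ge 2$ I would take $n$ goods $g_1,\dots,g_n$ and additive valuations with $v_i(g_i)=1$, $v_i(g_{i+1})=2$ (indices read $\bmod\ n$), and $v_i(g_j)=0$ for every other $j$, together with the allocation $X_i=\{g_i\}$. Agent $i$ envies only agent $i+1$, and since $X_{i+1}$ is a singleton, removing its good leaves the empty bundle, which $i$ values at $0<1=v_i(X_i)$; hence $X$ has no strong envy and is EFX. On the other hand, if a hidden set $S$ makes $X$ envy-free then $v_i(X_i)=1\ge v_i(X_{i+1}\setminus S)$ for every $i$, and whenever $g_{i+1}\notin S$ the right-hand side equals $2$; so $S$ must contain $g_{i+1}$ for every $i$, i.e. $S=M$ and $|S|=n>n-1$. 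Thus $X$ is not even HEF-$(n-1)$, and a fortiori not uHEF-$(n-1)$.

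\textbf{A uHEF-$1$ allocation that is not EFX.} Here I would use $n=2$, three goods $a,b,c$, additive valuations $v_1(a)=3$, $v_1(b)=4$, $v_1(c)=2$ and $v_2(a)=0$, $v_2(b)=v_2(c)=1$, and the allocation $X_1=\{a\}$, $X_2=\{b,c\}$. Agent $1$ strongly envies agent $2$, since dropping $c$ from $X_2$ still leaves $v_1(\{b\})=4>3=v_1(X_1)$, so $X$ is not EFX; but hiding $S=\{b\}$ gives $v_1(X_1)=3\ge v_1(\{c\})=2$ and $v_2(X_2)=2\ge v_2(\{a\})=0$, with $|S\cap X_i|\le 1$ for each $i$, so $X$ is uHEF-$1$. (Padding with additional agents having empty bundles and identically zero valuations extends this to every $n\ge 2$.)

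The two instances are unrelated, so there is no single technical obstacle; the two points to be careful about are that in the first construction the failure should be checked against the \emph{weaker} notion HEF-$(n-1)$ (so that uHEF-$(n-1)$ fails automatically), and that in the second construction the hidden good must be the good the envious agent values \emph{most} inside the envied bundle, whereas the EFX test removes the good it values \emph{least} — which is exactly why the two notions come apart.
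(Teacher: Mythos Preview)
Your proposal is correct and follows essentially the same approach as the paper: both parts are proved by small explicit additive instances, with a cyclic singleton allocation for the first claim and a ``one big hidden good versus one small removed good'' instance for the second. The only cosmetic differences are that the paper fixes $n=3$ with preferences $g_{i-1}\succ_i g_i$ in the first example and uses two agents with \emph{identical} valuations $(10,2,1)$ in the second, whereas you parametrize the cycle by $n$ and use non-identical valuations; neither change affects the argument.
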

\begin{proof}
Consider three agents $1, 2, 3$, with three goods $\{g_1, g_2, g_3\}$, and consider the allocation $X$ that gives agent $i$ good $g_i$. Suppose $g_1 \succ_2 g_2 \succ_2 g_3$, $g_2 \succ_3 g_3 \succ_3 g_1$, and $g_3 \succ_1 g_1 \succ_1 g_2$. Then, $X$ is trivially EFX, but not uHEF-$2$; any visible good $g_i$ is envied by agent $(i + 1)\pmod 3$, and so all goods need to be hidden in order to eliminate envy.

Consider two agents $1$ and $2$ (with identical valuations), and three goods $(g_1, g_2, g_3)$ with values $(10, 2, 1)$ respectively. Suppose $X$ is the allocation $(\{g_1, g_3\}, \{g_2\})$. Then, $X$ is uHEF-$1$, as hiding $g_1$ eliminates all envy. However, it is not EFX, as $(X_1 - g_3) \succ_2 X_2$.
\end{proof}

\section{Proofs from Section \ref{sec:hidden}}\label{apdx:hidden}


\thmhiddenvertexcover*
\begin{proof}
Let the output allocation of \textsc{Vertex Cover Round Robin} be $X$ with hidden goods $S$. The algorithm hides exactly one good from every agent in $C$. Therefore, to show that it is $G$-uHEF-$k$, it suffices to show that for all edges $(i, i') \in E(G)$, we have $X_i \succ_i (X_{i'} \setminus S)$.

Consider any agent $i \in N$ in a connected component other than $G_j$. Neither $i$ nor any of its neighbors receive any goods. Thus, the hidden envy constraint is maintained trivially on all such components. We therefore only need to consider $G_j$, and show that no edge in $E(G_j)$ has envy in any direction. Consider any $(i, i') \in E(G_j)$. Since $C$ is a vertex cover of $G_j$, either $i \in C$ or $i' \in C$.

Suppose without loss of generality that $i' \in C$. \textsc{Vertex Cover Round Robin} allocates exactly one good to each agent in every round till there are no goods left.

\textbf{Case 1: $i \notin C$.} There is trivially no envy from $i'$ towards $i$, as $i'$ chose before $i$ in every round. Let us verify whether $i$ can envy $i'$. In the final round, if $i'$ does not receive a good, then $i$ does not either. This implies $|X_i| \ge |X_{i'}| - 1$. Consider the case when $|X_i| = |X_{i'}| - 1$ (the other case is easier, and omitted). Suppose $X_i = \{g_1, g_2, \ldots, g_{\ell}\}$ and $X_{i'} = \{g'_1, g'_2, \ldots, g'_{\ell+1}\}$ where $g_t, g'_t$ are the goods received by $i$ and ${i'}$ respectively in round $t$. Since $i$ chose $g_t$ over $g'_{t+1}$ in round $t$, we have $g_t \succ_i g'_{t+1}$ for each $t \in [\ell]$. Therefore, $X_i \succ_i X_{i'} - g'_1$. Since $g'_1$ is the first good given to $i'$, and $i' \in C$, the algorithm hides $g'_1$. Therefore, there is no envy from $i$ towards $i'$. 

\textbf{Case 2: $i \in C$.} When both $i$ and $i'$ are in $C$, then $g_1$ and $g'_1$ are both hidden. By the same argument as before, we have $X_i \succ_i X_{i'} - g'_1$, and $X_{i'} \succ_{i'} X_i - g_1$. It follows that there is no envy along this edge after $S$ is hidden.
\end{proof}

\thmhiddentight*
\begin{proof}
Let $n = |N|$. We construct an instance over this graph with $n^3$ goods $M = \{g_1, g_2, \ldots, g_{n^3}\}$. 

We give all agents in $N$ identical additive valuations (denoted by $v$) where $v(g_j) = 1 + 2^{-j}$ for every $g_j \in M$.
Trivially, for any $S \subseteq M$, $|S| < v(S)$. On the other hand, for any $S \subseteq M$, we have 
\begin{align*}
v(S) = |S| + \sum_{g_j \in S} 2^{-j} < |S| + \sum_{j=1}^{\infty} 2^{-j} = |S| + 1
\end{align*}

This is summarized in the following observation, that we shall use.

\begin{observation}\label{obs:vS-bounds}
For any $S \subseteq M$, $|S| < v(S) < |S|+1$.
\end{observation}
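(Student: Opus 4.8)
The plan is to unfold the definition of $v$ and reduce both inequalities to an elementary bound on a finite sub-sum of a geometric series. By additivity of $v$ together with the definition $v(g_j) = 1 + 2^{-j}$, for any $S \subseteq M$ we have
\begin{align*}
v(S) = \sum_{g_j \in S}\left(1 + 2^{-j}\right) = |S| + \sum_{g_j \in S} 2^{-j}.
\end{align*}
Thus the claim $|S| < v(S) < |S| + 1$ is equivalent to the single statement $0 < \sum_{g_j \in S} 2^{-j} < 1$: the entire content of Observation \ref{obs:vS-bounds} is that the ``fractional correction'' contributed by the $2^{-j}$ weights is strictly positive and strictly below $1$.

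For the lower bound I would note that every summand $2^{-j}$ is strictly positive, so whenever $S$ is nonempty the sum $\sum_{g_j \in S} 2^{-j}$ is strictly positive, giving $v(S) > |S|$. The only degenerate case is $S = \varnothing$, where both sides of the left inequality equal $0$; since the observation is only ever applied to bundles that actually contain goods, I would simply state it for nonempty $S$. For the upper bound I would write $S = \{g_j : j \in I\}$ for an index set $I \subseteq \{1, \ldots, n^3\}$ of distinct positive integers, and dominate the finite sum by the full geometric series:
\begin{align*}
\sum_{g_j \in S} 2^{-j} = \sum_{j \in I} 2^{-j} \le \sum_{j=1}^{n^3} 2^{-j} < \sum_{j=1}^{\infty} 2^{-j} = 1.
\end{align*}
Combining this with the displayed expansion of $v(S)$ yields $|S| < v(S) < |S| + 1$.

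There is no genuine obstacle here: the argument is just the observation that assigning the $j$-th good the fractional weight $2^{-j}$ keeps the total fractional mass of any bundle strictly between $0$ and $1$, so that the integer part of $v(S)$ recovers the cardinality exactly via $|S| = \lfloor v(S) \rfloor$. The one point that deserves a sentence is the strictness of the upper inequality, which follows because $I$ is finite (indeed $I \subseteq \{1,\dots,n^3\}$) while the geometric series $\sum_{j=1}^{\infty} 2^{-j} = 1$ sums over all positive integers. This rounding behavior is precisely what makes the valuation useful in the tightness construction, so the proof simply makes it explicit.
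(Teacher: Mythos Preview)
Your proof is correct and follows essentially the same approach as the paper: expand $v(S) = |S| + \sum_{g_j \in S} 2^{-j}$, note the lower bound is immediate from positivity of the summands, and bound the fractional part above by the full geometric series $\sum_{j=1}^{\infty} 2^{-j} = 1$. You are slightly more careful than the paper in flagging the degenerate case $S = \varnothing$, where the strict lower bound fails; the paper simply asserts it ``trivially'' for all $S$, but as you note it is only ever applied to nonempty bundles.
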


Now, consider any two non-empty sets $\varnothing \subsetneq S, T \subseteq M$. If $|S| \ne |T|$, from Observation \ref{obs:vS-bounds}, we have $v(S) \ne v(T)$. If $|S| = |T|$, let $g_j$ be the good with the lowest index $j$, in $S \cup T$, and assume WLOG $g_j \in S$. We have, using Observation \ref{obs:vS-bounds},

\begin{align*}
    v(S) \ge |S| + 2^{-j} \ge |S| + \sum_{j' = j+1}^{\infty} 2^{-j'} > |S| + \sum_{g_{j'} \in T} 2^{-j'} = |T| + \sum_{g_{j'} \in T} 2^{-j'} = v(T).
\end{align*}
This gives us our next observation.

\begin{observation}\label{obs:v-degenerate}
For any $\varnothing \subsetneq S, T \subseteq M$, if $S \ne T$, we have $v(S) \ne v(T)$.
\end{observation}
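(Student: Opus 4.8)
The plan is to exploit the special structure of the valuation $v(g_j) = 1 + 2^{-j}$: since $v$ is additive, $v(S) = |S| + \sum_{g_j \in S} 2^{-j}$, so the integer part of $v(S)$ essentially records the cardinality $|S|$, while the fractional part $\sum_{g_j \in S} 2^{-j}$ acts as a binary encoding that pins down $S$ exactly. I would therefore split into two cases according to whether $|S| = |T|$.

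If $|S| \ne |T|$, the result is immediate from Observation \ref{obs:vS-bounds}: it places $v(S)$ strictly inside the open unit interval $(|S|, |S| + 1)$ and $v(T)$ strictly inside $(|T|, |T| + 1)$. As $|S|$ and $|T|$ are distinct integers, these intervals are disjoint, so $v(S) \ne v(T)$.

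The substantive case is $|S| = |T|$ with $S \ne T$, where the cardinalities cancel and everything reduces to the fractional parts. I would let $g_j$ be the good of smallest index in the symmetric difference, and assume WLOG $g_j \in S \setminus T$; then every index appearing in $S \setminus T$ or in $T \setminus S$ is at least $j$, and $T \setminus S$ uses only indices strictly greater than $j$. Cancelling the common terms and bounding the remaining tail by a (finite, hence strict) geometric sum gives
\begin{align*}
v(S) - v(T) = \sum_{g_{j'} \in S \setminus T} 2^{-j'} - \sum_{g_{j'} \in T \setminus S} 2^{-j'} \ge 2^{-j} - \sum_{j' = j+1}^{n^3} 2^{-j'} > 2^{-j} - 2^{-j} = 0,
\end{align*}
so $v(S) > v(T)$, and in particular $v(S) \ne v(T)$.

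The one delicate point --- the main obstacle --- is securing the \emph{strict} inequality in the equal-cardinality case: I must ensure that the single ``present'' term $2^{-j}$ cannot be matched by the entire tail of terms available to $T$. This is exactly the uniqueness of binary representations, and it hinges on two facts working together: the missing index $j$ forces $T \setminus S$ into the indices $j' > j$, and the finiteness of $M$ makes the partial geometric sum $\sum_{j' = j+1}^{n^3} 2^{-j'}$ strictly smaller than its limit $2^{-j}$. Equivalently, one could observe that the infinite tail $\sum_{j' > j} 2^{-j'}$ equals $2^{-j}$ but is never attained by a proper, finite sub-collection of the indices.
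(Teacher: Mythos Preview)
Your proof is correct and follows essentially the same approach as the paper: both split on whether $|S| = |T|$, invoke Observation~\ref{obs:vS-bounds} when the cardinalities differ, and in the equal-cardinality case isolate the smallest distinguishing index and bound the remaining tail by a geometric sum. Your choice of $g_j$ from the symmetric difference (rather than from $S \cup T$ as the paper writes it) is in fact the cleaner formulation, since it makes the strict inequality go through without any implicit assumption that $g_j \notin T$.
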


Now, coming back to the instance defined on $G$, we show that any $G$-HEF allocation hides at least $k$ goods. Let $X$ be any $G$-HEF allocation. There are $n^3$ goods and at most $n$ connected components, and so by the pigeonhole principle, some connected component has at least $n^2$ goods allocated. Let this component be $G'$. We have two possible cases. 

\textbf{Case 1: all agents in $G'$ receive at least 1 good.}
In this case, from Observation \ref{obs:v-degenerate}, there is envy along every edge of the connected component $G'$. This envy can only be eliminated by hiding a good from one of the endpoints of the edge in question. It follows that the set of agents who hide a good in $X$ must form a vertex cover of the component $G'$. Therefore, the number of hidden goods is at least $k$.

\textbf{Case 2: an agent $i \in V(G')$ receives $0$ goods.} In this case, since $G'$ has at least $n^2$ goods, by the pigeonhole principle, there must be some agent $j$ in $G'$ with at least $n$ goods. Since $G'$ is a connected component, there is a path from $i$ to $j$. Let this path be $(i_1, i_2, \ldots, i_t)$ where $i_1 = i$ and $i_t = j$.

If $|X_{i_\ell}| < |X_{i_{\ell+1}}|$, we must hide at least $|X_{i_{\ell+1}}| - |X_{i_{\ell}}|$ goods in $X_{\ell+1}$ to eliminate the envy between $i_\ell$ and $i_{\ell+1}$ (Observation \ref{obs:vS-bounds}). We therefore have that the number of hidden goods in $X$ is at least 
\begin{align*}
    \sum_{\ell \in [t-1]} \max\{|X_{i_{\ell+1}}| - |X_{i_{\ell}}|, 0\} \ge \sum_{\ell \in [t-1]} |X_{i_{\ell+1}}| - |X_{i_{\ell}}| = |X_j| - |X_i| = n,
\end{align*}
and therefore, at least $n$ goods need to be hidden in the allocation $X$. Since $k$ is trivially upper bounded by $n$, the proof is complete.
\end{proof}

\corhiddenhardness*
\begin{proof}
We show part (a) here. Part (b) can be shown similarly. 

Our reduction, unsurprisingly, is from \textsc{Vertex Cover}: given an undirected graph $G = (N, E)$ and positive integer $K$, the \textsc{Vertex Cover} problem simply asks if $G$ has a vertex cover of size at most $K$.
Given an instance $(G, K)$ of the vertex cover problem, we create an instance of the graphical fair allocation problem with $|N|$ agents and $n^3$ goods as described in Theorem \ref{thm:hiddentight}. Recall that all the agents have identical additive valuations.
Note that this instance can be built easily in polynomial time with respect to the original instance $(G, K)$.
We show that a vertex cover of size at most $K$ exists in the orginal problem if and only if a $G$-HEF-$K$ allocation exists in the fair allocation instance. 

Assume there exists a vertex cover of size $k \le K$ in the orginal vertex cover problem. Using Theorem \ref{thm:hiddenvertexcover}, there exists a $G$-HEF-$k$ allocation in the fair allocation instance. Since $k \le K$, we can conclude that there exists a $G$-HEF-$K$ allocation.

Assume the minimum vertex cover of graph $G$ has size $k > K$. Then, by Theorem \ref{thm:hiddentight}, for our specific fair allocation instance, there exists no allocation with at most $K$ hidden goods. This completes the reduction.
\end{proof}

\hiddennottight*
\begin{proof}
Consider the following graph $G$, with $n \ge 3$ nodes. We have a star, $K_{1, \ceil{\frac{n-1}{2}}-1}$, with the central node labeled $1$ and the outer nodes labeled $2, 3, \ldots, \ceil{\frac{n-1}{2}}$. The center of the star is also connected to node $\ceil{\frac{n-1}{2}}+1$, which is connected to a single node in a clique $K_{\floor{\frac{n-1}{2}}}$. The graph is shown in Figure~\ref{fig:vcnotrequired}.
When $n \ge 3$, the clique $K_{\floor{\frac{n-1}{2}}}$ is guaranteed to be non-empty. 
Of course, any vertex cover of $G$ needs to have $\Theta(n)$ vertices, as the clique itself needs all of its vertices (except for one) to be covered.

\begin{figure*}
    \centering
    
    \begin{tikzpicture}[node distance = 1.25cm]

    \def \n {20}
    \def \N {8}
    \def \radius {2cm}
    \def \rd {1mm}
    \def \rer {4mm}
    
    \def \margin {8} 
    
    \node[draw, circle] at (360:0mm) (center) [label={[label distance=1mm]180:1}]{};
    \foreach \i [count=\ni from 0] in {4, 3, 2}{
      \node[draw, circle] at ({135-\ni*45}:\radius) (\ni) [label={[label distance=0.5mm]90:$\i$}]{};
      \draw (center)--(\ni);
    }
    
    \foreach \i [count=\ni from 3] in {\ceil{\frac{n-1}{2}}+1, \ceil{\frac{n-1}{2}}}{
      \node[draw, circle] at ({135-\ni*45}:\radius) (\ni) [label={[label distance=0.5mm]270:$\i$}]{};
      \draw (center)--(\ni);
    }
    
    \draw[dotted] (-78:\radius) arc[start angle=-78, end angle=-202, radius=\radius];
    
    \node[draw, fill=lightgray, circle, minimum size = 4cm] at (360:3*\radius) (bridge) {\Huge$K_{\floor{\frac{n-1}{2}}}$};
    \draw (3)--(bridge);
    
    
    \end{tikzpicture}
    
    \caption{A graph with $n$ vertices in which any instance with $n$ goods admits a $G$-uHEF-$2$ allocation, under additive valuations. Any vertex cover of this graph is of size $\Theta(n)$}
    \label{fig:vcnotrequired}
\end{figure*}
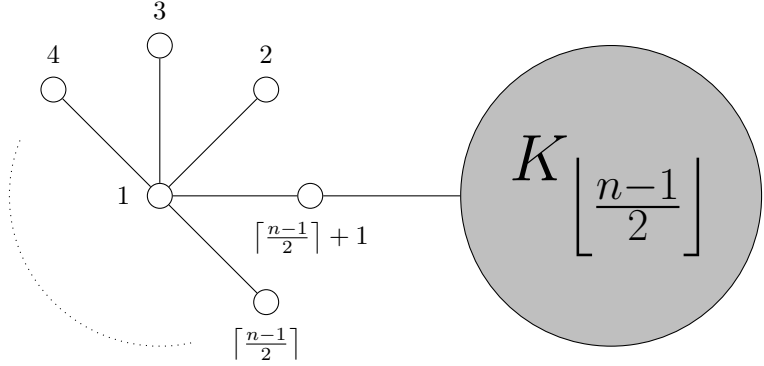

To compute a $G$-uHEF-$2$ allocation with $n$ goods, we run a picking sequence protocol with the order $1, 2, 3, \ldots \ceil{\frac{n-1}{2}}, \ceil{\frac{n-1}{2}} + 1, 1, 2, \ldots \floor{\frac{n-1}{2}}$. 
In the protocol, we iterate through the agents in the specified order, and give each agent their most preferred good among all currently unallocated goods.
In such an allocation, only two goods need to be hidden to achieve envy freeness:
\begin{inparaenum}[(a)]
    \item the first good assigned to $1$, and
    \item the sole good assigned to $\ceil{\frac{n-1}{2}} + 1$. 
\end{inparaenum}

Once these two goods are hidden, using an argument similar to that of Theorem \ref{thm:hiddenvertexcover}, we can show that no agent adjacent to agent $1$ envies it. Agent $1$ does not envy any of its neighbors as well. 
Moving on to the clique, none of the agents in the clique receive any good, so they do not envy each other. For the same reason, agent $\ceil{\frac{n-1}{2}}+1$ does not envy any agent in the clique. The agents in the clique do not envy agent $\ceil{\frac{n-1}{2}}+1$ either since the entire bundle allocated to $\ceil{\frac{n-1}{2}}+1$ is hidden.
Therefore, the allocation is $G$-uHEF-$2$.
\end{proof}

\section{Proofs from Section \ref{sec:theory}}\label{apdx:theory}


\propgefxstar*

\begin{proof}
Let the center of the star correspond to agent $c \in N$. If all the outer agents had the same valuation function $v_c$ as $c$, a complete EFX allocation would be guaranteed to exist \citep{efxcanon}. Let $Y = (Y_1, Y_2, \ldots, Y_n)$ be one such EFX allocation where all the agents have the valuation function $v_c$.

We iterate through the outer vertices (in any order) and construct an allocation $X$ by allocating to each outer vertex its highest-valued bundle in $Y$ that has not already been allocated. We allocate the final bundle to the center of the star, i.e. to agent $c$.

We claim that the allocation $X$ is $G$-EFX. If $c'$ is an outer vertex, $c'$ does not envy $c$, as they picked their bundle $X_{c'}$ before the bundle $X_c$ was chosen. The center, $c$, does not strongly envy any of the outer vertices, because $Y$ is EFX for agents with the valuation function $v_c$. Otherwise, we would have strong envy between the center and some outer vertex, say $c'$, implying that for some good $g \in X_{c'}$, we have $X_{c'} - g \succ_c X_c$. However, this implies the existence of strong envy in the allocation $Y$, which is a contradiction.
\end{proof}

\thmresultone*

\begin{proof}
The proof is similar to that of Proposition \ref{prop:gefx:star}. Suppose the core vertices all have identical valuation function $v_c$, and let $Y = (Y_1, Y_2, \ldots, Y_n)$ be an EFX allocation for this problem instance assuming that all $n$ agents have this same valuation function $v_c$.

We construct an allocation $X$ as follows. We first iterate through the agents in $N\setminus N'$ (in any order) and allocate to each such agent its highest-valued bundle in $Y$ that has not already been allocated. Once this is done, we distribute the remaining bundles in $Y$ to the agents in $N'$ (in any order).

We claim that the allocation $X$ is $G$-EFX. If $i \in N\setminus N'$, then $i$ has neighbors only in $N'$. Clearly, $i$ does not envy any vertex in $N'$, because they picked their bundle $X_i$ over each of the bundles distributed among $N'$. If $i \in N'$, then $i$ has valuation function $v_c$ by definition. Of course, $i$ does not strongly envy any of the other vertices because $Y$ is EFX for agents with the valuation function $v_c$. 
Otherwise, we would have strong envy between $i$ and some other vertex, say $i'$, implying that for some good $g \in X_{i'}$, we have $X_{i'} - g \succ_i X_i$. However, this implies the existence of strong envy in the allocation $Y$, which is a contradiction.
\end{proof}

\resultonecorollary*
\begin{proof}
Theorem \ref{thm:result-1} implies the existence of a $G$-EFX allocation on the graph in Figure \ref{subfig:star-general-1}. The EFX criterion is maintained between all edges in the graph, and therefore between all pairs of agents except possibly the one consisting of the top and bottom vertices.
\end{proof}

\thmresulttwo*

\begin{proof}
If all agents have consistent additive valuations, an EFX allocation is guaranteed to exist \citep{efxcanon}.
Let $Y = (Y_1, \ldots, Y_n)$ be an EFX allocation for this problem instance assuming all the core vertices in $N'$ have their respective valuation functions, and each vertex in $N\setminus N'$ has an identical valuation function to any of its neighbors (this is well-defined, by construction).

We construct an allocation $X$ as follows. We first iterate through the agents in $N\setminus N'$ (in any order). For each such agent $i$, suppose it is connected to agents in $N'_k$. We let $i$ choose its highest-valued bundle in $Y$ allocated to any agent with valuation $v_k$ and not yet chosen by any other agent. Once this is done for all $i \in N\setminus N'$, we distribute the remaining bundles in $Y$ to the agents in $N'$ while maintaining the invariant that any such agent $i$ receives a bundle allocated in $Y$ to an agent with valuation function $v_i$.

We claim that this algorithm terminates with a bundle to each agent in $N$. There is no point in the algorithm where we wish to assign a bundle in $Y$ to an agent $i$, but all bundles in $Y$ allocated to agents with valuation $v_i$ have already been assigned. This follows by construction of $Y$, and the fact that each agent in $N \setminus N'$ only selects from bundles intended for neighbors of agents in $N'_k$.

We further claim that the allocation $X$ is $G$-EFX. If $i \in N\setminus N'$, with neighborhood $\mathrm{Nbd}_G(i) \subseteq N'_k$, then $i$ is allocated their bundle from $Y$ before any node of $N'_k$, and they are all allocated from the same pool of bundles (corresponding to agents with valuation $v_k$). Therefore, $i$ does not envy any agent in $N'_k$. If $i \in N'$, they do not strongly envy their neighbors as that would violate the EFX property for the allocation $Y$ for similar reasons as in the proof of Theorem \ref{thm:result-1}. 
\end{proof}

\thmtimecomplexityone*

\begin{proof}
An EFX allocation can be computed\footnote{ This assumes the goods are sorted in order of any of the consistent valuations, so the result is up to an additive factor of $O(m\log m)$. We will ignore this factor for our analysis, as it can be achieved by a separate pre-processing step to sort the goods by the core vertex valuations.} for agents with consistent additive valuations in $O(m n^3)$ time \citep{efxcanon}. The initial EFX allocation $Y$ in Theorem \ref{thm:result-1} and \ref{thm:result-2} can be computed using this algorithm. The next step involves iterating through the agents and giving them their best unallocated bundle that satisfies certain additional conditions. For each agent, we can find this bundle in $O(nm)$ time since there are only at most $n$ bundles and computing the valuation of each bundle can be done in $O(m)$ time (using additivity). Therefore the second step of the algorithm takes $O(m n^2)$ time. This gives us an overall time complexity of $O(m n^3)$.

\end{proof}

\threeedgepath*
\begin{proof}
Let the path have vertices $1, 2, 3, 4$ in that order, with valuation functions $v_1, v_2, v_3, v_4$ (see Figure \ref{fig:three-edge-path}). We will use the fact that an EFX allocation is guaranteed to exist when each agent only has one of two types of (general) valuations \citep{twotypes}. Let $Y = (Y_1, Y_2, Y_3, Y_4)$ be an EFX allocation of the set of goods $M$ on four agents with valuations $(v_2, v_2, v_3, v_3)$ respectively.

We construct an allocation $X$ from $Y$ as follows. We allocate to agent $1$ their highest-valued bundle in the set $\{Y_1, Y_2\}$, and assign the other bundle in that set to agent $2$. Similarly, we allocate to agent $4$ their highest-valued bundle in the set $\{Y_3, Y_4\}$, and assign the other one to agent $3$.

We claim the allocation $X$ is $G$-EFX on the path $P_4$. Agents $1$ and $4$ do not envy agents $2$ and $3$ respectively since they were allocated a bundle that they (weakly) prefer to that of their unique neighbor. Agents $2$ and $3$ do not strongly envy any other agent, because $Y$ is EFX for agents with the valuation functions $v_2$ or $v_3$.
\end{proof}

\thmresultthree*
\begin{proof}
Once again, we will use the fact that an EFX allocation is guaranteed to exist when each agent only has one of two types of valuations \citep{twotypes}. Consider a modified instance of the problem, on the same graph, but where all the outer vertices in $G$ have the same valuation function as their neighbors among the core vertices. Note that this is well-defined by construction, and furthermore, this instance has agents with only types $v_k$ and $v_\ell$. So let $Y = (Y_1, \ldots, Y_n)$ be an EFX allocation for this modified instance.

We first divide $Y$ into two {\em pools} of bundles based on the valuation function of the agent they were allocated to. Suppose $Y^k$ is the set of bundles allocated in $Y$ to agents with valuation $v_k$, and $Y^\ell$ is the set of bundles allocated to agents with valuation $v_\ell$. 

We construct an allocation $X$ by allocating the bundles in $Y$ in a particular order. We start with the outer agents in $N\setminus N'$ whose neighborhood is contained in $N'_k$. We iterate through these agents (in any order), allocating to each such agent their highest-valued bundle in $Y^k$ that has not been allocated yet. Then, we assign the remaining bundles in $Y^k$ in any order to the agents in $N'_k$. We repeat this same procedure with the remaining agents and the set of bundles $Y^\ell$, starting with the outer agents with neighborhoods in $N'_\ell$, as before.

We claim that this algorithm terminates with a bundle to each agent in $N$. This follows by similar arguments as in the proof of Theorem \ref{thm:result-2}. We also claim that the allocation $X$ is $G$-EFX. If $i \in N\setminus N'$, then $i$ is allocated their bundle from the same pool of bundles as all their neighbors, but before any of their neighbors are. So $i$ does not envy any of their neighbors. If $i \in N'$, they do not strongly envy their neighbors as that would violate the allocation $Y$ being EFX, by a similar argument as in the proof of Theorem \ref{thm:result-2}.
\end{proof}

\lexicographic*

\begin{proof}
Denote the distance function in $G$ by $\mathrm{dist}_G(i, j)$. Choose a pair of agents $u$ and $v$ with $\mathrm{dist}_G(u, v) \geq 4$. We will find a subset $S \subseteq N$ to disconnect $u$ and $v$. Then we will assign goods to the component of $G[N\setminus U]$ containing $u$, chores to the component of $G[N\setminus U]$ containing $v$, and nothing (the empty allocation) to all other agents.

Let $S = \{i \in N : \mathrm{dist}_G(i, u) \geq 2 \land \mathrm{dist}_G(i, v) \geq 2\}$. Note that $N \setminus S$ consists of $u$, $v$, and all neighbors of $u$ and $v$. $S$ is non-empty, since the shortest path between $u$ and $v$ contains at least one agent at distance exactly $2$ from both. Also, $\mathrm{Nbd}_G(u) \cap \mathrm{Nbd}_G(v) = \varnothing$, as otherwise $\mathrm{dist}_G(u, v) \leq 3$.

Let the neighbors of $u$ select their (single) highest-priority good in any order, from the unallocated goods. Let the number of neighbors of $v$ be $n_v$. Assign the $n_v$ chores with the highest priority for $v$ to its neighbors in any order, one to each. Assign the remaining goods (if any) to $u$, and the remaining chores (if any) to $v$. If we do not have enough neighbors to account for all the goods (or chores), the result is trivial.

Any envy in $G$ directed towards any neighbor of $u$ is not strong, since this neighbor has at most one good. The only envy directed towards $u$ can come from its neighbors, but this does not happen, since they all selected their top-priority goods before $u$ was given its bundle (using the lexicographic property).

The neighbors of $v$ in $G$ do not have strong envy towards any other agent, since each of them has at most one chore. The only possibly envy towards these neighbors is from agent $v$, but this is not strong, since we assigned $v$'s top-priority chores to its neighbors before assigning any chores to $v$ (again, using the lexicographic property).

There cannot be strong envy anywhere else in the graph $G$.

Note that this allocation can be computed in polynomial time. We can find the pair of vertices $u$ and $v$ in $\Theta(n^3)$ time using the Floyd-Warshall algorithm; selecting and assigning the top-priority goods and chores for $u$ and $v$ to their neighborhoods takes time $\Theta(nm)$ for instances with $m$ items.





\end{proof}

\end{document}